\renewclass{\EXP}{EXPTIME}
\tikzstyle{every state}=[minimum size=2em]
\tikzset{>=latex,auto,node distance=3cm, every
  loop/.style={looseness=6}, initial text={}, inner sep=1mm,
  loopright/.style={loop,looseness=6,out=35, in=-35},
  loopleft/.style={loop,looseness=6,out=145, in=215},
  loopabove/.style={loop,looseness=6,out=125, in=55},
  loopbelow/.style={loop,looseness=6,out=-125, in=-55}, }
\newcommand\parag[1]{\smallskip\noindent\textbf{#1}}
\title{Real-time Synthesis is Hard!}
\author{Thomas Brihaye\inst{1}, Morgane Esti\'evenart\inst{1}, Gilles
  Geeraerts\inst{2}, Hsi-Ming Ho\inst{1}, Benjamin Monmege\inst{3}, Nathalie Sznajder\inst{4}}
\institute{Universit\'e de Mons, Belgium,
  \email{thomas.brihaye,morgane.estievenart,hsi-ming.ho@umons.ac.be}
  \and
  Universit\'e libre de Bruxelles, Belgium, 
  \email{gigeerae@ulb.ac.be}
  \and
  Aix Marseille Univ, CNRS, LIF, France,
  \email{benjamin.monmege@lif.univ-mrs.fr}
  \and 
  Sorbonne Universit\'es, UPMC, LIP6, France,
  \email{nathalie.sznajder@lip6.fr}
}
\newcommand\LTL{\ensuremath{\mathsf{LTL}}\xspace}
\newcommand\MTL{\ensuremath{\mathsf{MTL}}\xspace}
\newcommand\ECL{\ensuremath{\mathsf{ECL}}\xspace}
\newcommand\ECLfut{\ensuremath{\ECL_\textsf{fut}}\xspace}
\newcommand\MITL{\ensuremath{\mathsf{MITL}}\xspace}
\newcommand\SMTL{\ensuremath{\mathsf{Safety\text{-}MTL}}\xspace}
\newcommand\CFMTL{\ensuremath{\mathsf{coFlat\text{-}MTL}}\xspace}
\ProvideDocumentCommand{\Until}{o}{\IfNoValueTF{#1}{\mathop{\mathsf{U}}}{\mathop{\mathsf{U}_{#1}}}}
\ProvideDocumentCommand{\dualUntil}{o}{\IfNoValueTF{#1}{\mathop{\widetilde{\mathsf{U}}}}{\mathop{\widetilde{\mathsf{U}}_{#1}}}}
\ProvideDocumentCommand{\Finally}{o}{\IfNoValueTF{#1}{\mathop{\lozenge}}{\mathop{\lozenge_{#1}}}}
\ProvideDocumentCommand{\Globally}{o}{\IfNoValueTF{#1}{\mathop{\square}}{\mathop{\square_{#1}}}}
\ProvideDocumentCommand{\Next}{o}{\IfNoValueTF{#1}{\mathop{\bigcirc}}{\mathop{\bigcirc_{#1}}}}
\ProvideDocumentCommand{\wUntil}{o}{\IfNoValueTF{#1}{\mathop{\overline{\mathsf{U}}}}{\mathop{\overline{\mathsf{U}}_{#1}}}}
\ProvideDocumentCommand{\wFinally}{o}{\IfNoValueTF{#1}{\mathop{\overline{\lozenge}}}{\mathop{\overline{\lozenge}_{#1}}}}
\ProvideDocumentCommand{\wGlobally}{o}{\IfNoValueTF{#1}{\mathop{\overline{\square}}}{\mathop{\overline{\square}_{#1}}}}
\newcommand\TA{\ensuremath{\mathrm{TA}}\xspace}
\newcommand\OCATA{\textsf{OCATA}\xspace}
\newcommand\Lang{\mathcal L}
\newcommand\T{\mathcal T} 
\newcommand\A{\mathcal A} 
\newcommand\B{\mathcal B} 
\newcommand\Plant{\mathcal P} 
\newcommand\transitions{\Delta}
\newcommand\dtransitions{\delta} 
\newcommand\atransitions{\delta} 
\newcommand\Operations{\Gamma}
\newcommand\operation{\gamma}
\newcommand\ptransitions{\delta_\Plant} 
\newcommand\Enabled{\mathsf{En}} 
\newcommand\Regions{REG}
\newcommand\reg{reg}
\newcommand\fract{\mathsf{fract}}
\newcommand\wordreg{H}
\newcommand\setwordreg{\mathcal C}
\newcommand\powerset[1]{2^{#1}}
\newcommand\Guards{\mathcal G}
\newcommand\AtomicGuards{\mathcal G^{\mathrm{atom}}}
\newcommand\strategy{\pi}
\newcommand\sem[1]{\ensuremath{\left\llbracket#1\right\rrbracket}}
\ProvideDocumentCommand{\ReacSynt}{s}{\IfBooleanT{#1}{\ensuremath{\mathsf{RSP}}}\IfBooleanF{#1}{\ensuremath{\mathsf{RS}}}\xspace}
\ProvideDocumentCommand{\ImplReacSynt}{s}{\IfBooleanT{#1}{\ensuremath{\mathsf{IRSP}}}\IfBooleanF{#1}{\ensuremath{\mathsf{IRS}}}\xspace}
\ProvideDocumentCommand{\BoundReacSynt}{s}{\IfBooleanT{#1}{\ensuremath{\mathsf{BRSP}}}\IfBooleanF{#1}{\ensuremath{\mathsf{BRS}}}\xspace}
\ProvideDocumentCommand{\BoundPrecReacSynt}{s}{\IfBooleanT{#1}{\ensuremath{\mathsf{BPRSP}}}\IfBooleanF{#1}{\ensuremath{\mathsf{BPRS}}}\xspace}
\newcommand{\RS}{\ensuremath{\mathsf{RS}}\xspace}
\newcommand{\IRS}{\ensuremath{\mathsf{IRS}}\xspace}
\newcommand{\BPRS}{\ensuremath{\mathsf{BPrecRS}}\xspace}
\newcommand{\BCRS}{\ensuremath{\mathsf{BClockRS}}\xspace}
\newcommand{\BRRS}{\ensuremath{\mathsf{BResRS}}\xspace}
\renewcommand\R{\mathbb R}
\newcommand\Q{\mathbb Q}
\newcommand\N{\mathbb N}
\renewcommand\phi{\varphi}
\renewcommand\leq{\leqslant}
\renewcommand\geq{\geqslant}
\newcommand\tuple[1]{\left\langle #1 \right\rangle}
\newcommand\checkl{\ensuremath{\textit{Check}^\leftarrow}}
\newcommand\checkr{\ensuremath{\textit{Check}^\rightarrow}}
\newcommand\psifst{\ensuremath{\psi^\rightarrow_\textit{fst}}}
\begin{document}

\maketitle

\begin{abstract}
  We study the reactive synthesis problem (\RS) for specifications
  given in Metric Interval Temporal Logic (\MITL). \RS is known to be
  undecidable in a very general setting, but on infinite words only;
  and only the very restrictive \BRRS subcase is known to be decidable
  (see D'Souza \textit{et al.} and Bouyer \textit{et al.}). In this
  paper, we precise the decidability border of \MITL synthesis. We
  show \RS is undecidable on finite words too, and present a landscape
  of restrictions (both on the logic and on the possible controllers)
  that are still undecidable. On the positive side, we revisit \BRRS
  and introduce an efficient on-the-fly algorithm to solve it.
\end{abstract}

\section{Introduction} 
The design of programs that respect real-time specifications is a
difficult problem with recent and promising advances. Such programs
must handle thin timing behaviours, are prone to errors, and difficult
to correct a posteriori. Therefore, one road to the design of correct
real-time software is the use of automatic synthesis methods, that
\emph{build}, from a specification, a program which is correct by
construction. To this end, \emph{timed games} are nowadays recognised
as the key foundational model for the synthesis of real-time
programs. These games are played between a \emph{controller} and an
\emph{environment}, that propose actions in the system, modelled as a
\emph{plant}. The \emph{reactive synthesis problem} (\RS) consists,
given a real-time specification, in deciding whether the controller
has a winning strategy ensuring that every execution of the plant
consistent with this strategy (i.e., no matter the choices of the
environment) satisfies the specification. As an example, consider a
lift for which we want to design a software verifying certain safety
conditions. In this case, the plant is a (timed) automaton, whose
states record the current status of the lift (its floor, if it is
moving, the button on which users have pushed\ldots), as well as
timing information regarding the evolution in-between the different
states. On the other hand, the specification is usually given using
some real-time logic: in this work, we consider mainly specifications
given by a formula of \MITL \cite{AluFed96}, a real-time extension of
\LTL. Some actions in the plant are controllable (closing the doors,
moving the cart), while others belong to the environment (buttons
pushed by users, exact timing of various actions inside intervals,
failures\ldots). Then, the \RS problem asks to compute a controller
that performs controllable actions at the right moments, so that, for
all behaviours of the environment, the lift runs correctly.

In the \emph{untimed case}, many positive theoretical and practical
results have been achieved regarding \RS: for instance, when the
specification is given as an \LTL formula, we know that if a winning
strategy exists, then there is one that can be described by a finite
state machine \cite{PnuRos89}; and efficient \LTL synthesis algorithms
have been implemented \cite{FilJin09,BohyBFJR12}. Unfortunately, in
the real-time setting, the picture is not so clear. Indeed, a winning
strategy in a timed game might need unbounded memory to recall the
full prefix of the game, which makes the real-time synthesis problem a
\textit{hard} one. This is witnessed by three papers presenting
negative results: D'Souza and Madhusudan \cite{DSoMad02} and Bouyer
\textit{et al.} \cite{BouBoz06} show that \RS is undecidable (on
finite and infinite words) when the specification is respectively a
timed automaton and an \MTL formula (the two most expressive formalisms in
\figurename~\ref{fig:decsummary}). More recently, Doyen \textit{et
  al.} show \cite{DoyGee09} that \RS is undecidable in the infinite
words semantics, when the specification is given using \MITL; but
leave the finite words case open.

When facing an undecidability result, one natural research direction
consists in considering subcases in order to recover decidability:
here, this amounts to considering fragments of the logic, or
restrictions on the possible controllers. Such results can also be
found in the aforementioned works. In \cite{DSoMad02}, the authors
consider a variant of \RS, called \emph{bounded resources reactive
  synthesis} (\BRRS) where the number of clocks and the set of guards
that the controller can use are fixed a priori, and the specification
is given by means of a timed automaton. By coupling this technique
with the translation of \MITL into timed automata~\cite{BriEst13}, one
obtains a 3-$\EXP$ procedure (in the finite and infinite words
cases). Unfortunately, due to the high cost of translating \MITL into
timed automata and the need to construct its entire deterministic
region automaton, this algorithm is unlikely to be amenable to
implementation. Then, \cite{BouBoz06} presents an on-the-fly algorithm
for \BRRS with \MTL specifications (\MTL is a strict superset of
\MITL), on finite words, but their procedure runs in non-primitive
recursive time.

Hence, the decidability status of the synthesis problem (with \MITL
requirements) still raises several questions, namely:
\begin{inparaenum}[($i$)]
\item Can we relax the restrictions in the definition of \BRRS while
  retaining decidability?
\item Is \RS decidable on finite words, as raised in \cite{DoyGee09}?
\item Are there meaningful restrictions of the logic that make \RS
  decidable?
\item Can we devise an on-the-fly, efficient, algorithm that solves
  \BRRS in 3-$\EXP$ as in \cite{DSoMad02}?
\end{inparaenum}
In the present paper, we provide answers to those
questions.
First, we consider the additional \IRS, \BPRS and \BCRS problems, that
introduce different levels of restrictions. \IRS requests the
controller to be a timed automaton. \BPRS and \BCRS are further
restrictions of \IRS where respectively the set of guards and the set
of clocks of the controller are fixed a priori. Thus, we consider the
following hierarchy of problems:
$\RS\supseteq
\IRS\supseteq \begin{tabular}{c}{\BPRS}\\{\BCRS}\end{tabular}\supseteq
\BRRS$.
Unfortunately, while \IRS, \BPRS and \BCRS seem to make sense in
practice, they turn out to be undecidable both on finite and infinite
words---an answer to points~($i$) and~($ii$). Our proofs are based on
a \emph{novel} encoding of halting problem for deterministic channel
machines.  By contrast, the undecidability results of \cite{BouBoz06}
(for \MTL) are reductions from the same problem, but their encoding
relies heavily on the ability of \MTL to express \emph{punctual
  constraints} like `every $a$ event is followed by a $b$ event
\emph{exactly} one time unit later', which is not allowed by \MITL. To
the best of our knowledge, our proofs are the first to perform such a
reduction in a formalism that disallows punctual requirements---a
somewhat unexpected result. Then, we answer point~($iii$) by
considering a hierarchy of syntactic subsets of \MITL (see
\figurename~\ref{fig:decsummary}) and showing that, for all these
subsets, \BPRS and \BCRS (hence also \IRS and \RS) remain undecidable,
on finite and infinite words. Note that the undecidability proof of
\cite{DSoMad02} cannot easily be adapted to cope with these cases,
because it needs a mix of open and closed constraints; while we prove
undecidable very weak fragments of \MITL where only closed or only
open constraints are allowed.  All these negative results shape a
precise picture of the decidability border for real-time synthesis (in
particular, they answer open questions from
\cite{BouBoz06},\cite{BulDav14} and~\cite{DoyGee09}).  On the positive
side, we answer point ($iv$) by devising an on-the-fly algorithm to
solve \BRRS (in the finite words case) that runs in 3-$\EXP$. It
relies on one-clock alternating timed automata (as in~\cite{BouBoz06},
but unlike \cite{DSoMad02} that use timed automata), and on the
recently introduced \emph{interval semantics}~\cite{BriEst13}.

\section{Reactive synthesis of timed properties}

Let $\Sigma$ be a finite alphabet. A (finite) timed word\footnote{In
  order to keep the discussion focused and concise, we give the formal
  definitions for finite words only. It is straightforward to adapt
  them to the infinite words case.} over $\Sigma$ is a finite word
$\sigma = (\sigma_1,\tau_1)\cdots (\sigma_n,\tau_n)$ over
$\Sigma\times \R^+$ with $(\tau_i)_{1\leq i\leq n}$ a non-decreasing
sequence of non-negative real numbers. We denote by $T\Sigma^\star$
the set of finite timed words over $\Sigma$. A \emph{timed language}
is a subset $L$ of $T\Sigma^\star$.

\parag{Timed logics.} We consider the reactive synthesis problem
against various real-time logics, all of them being restrictions of
Metric Temporal Logic (\MTL)~\cite{Koy90}.  The logic \MTL is a timed
extension of \LTL, where the temporal modalities are labelled with a
timed interval. The formal syntax of \MTL is given as follows:
\begin{displaymath}
\phi := \top\ |\ a\ |\ \phi\wedge\phi\ |\ \neg\phi\ |\ \phi \Until[I]\phi
\end{displaymath}
where $a\in\Sigma$ and $I$ is an interval over $\mathbb{R}^+$ with endpoints
in $\N\cup\{+\infty\}$. 

We consider the \emph{pointwise semantics} and interpret \MTL formulas
over timed words. The semantics of a formula $\phi$ in \MTL is defined
inductively in the usual way. We recall only the semantics of
$\Until$: given
$\sigma=(\sigma_1,\tau_1)\cdots (\sigma_n,\tau_n)\in T\Sigma^\star$,
and a position $1\leq i\leq n$, we let
$(\sigma,i)\models\phi_1\Until[I] \phi_2$ if there exists $j>i$ such
that $(\sigma,j)\models\phi_2$, $\tau_j-\tau_i\in I$, and
$(\sigma, k)\models\phi_1$, for all $i< k< j$.

With $\bot := \neg\top$, we can recover the `next' operator
$\Next[I]\phi := \bot\Until[I] \phi$, and we rely on the usual
shortcuts for the `finally', `globally' and `dual-until' operators:
$\Finally[I]\phi:= \top\Until[I]\phi$,
$\Globally[I]\phi := \neg\Finally[I]\neg\phi$ and
$\phi_1\dualUntil[I] \phi_2 := \neg((\neg\phi_1)\Until[I](\neg
\phi_2))$.
We also use the non-strict version of the `until' operator
$\phi_1\wUntil[I] \phi_2$, defined as
$\phi_2\vee (\phi_1\wedge \phi_1\Until[I]\phi_2)$ (if $0 \in I$) or
$\phi_1\wedge \phi_1\Until[I]\phi_2$ (if $0 \notin I$).  This notation
yields the corresponding non-strict operators $\wFinally\phi$ and
$\wGlobally\phi$ in the natural way.
When the interval $I$ is the entire set of the non-negative real
numbers, the subscript is often omitted.  We say that $\sigma$
satisfies the formula $\phi$, written $\sigma\models\phi$ if
$(\sigma,1)\models\phi$, and we denote by $\Lang(\phi)$ the set of all
timed words $\sigma$ such that $\sigma\models\phi$.

We consider mainly a restriction of $\MTL$ called \MITL (for Metric
Interval Temporal Logic), in which the intervals are restricted to
non-singular ones. We denote by $\textsf{\textup{Open-}}\MITL$ the
open fragment of \MITL: in negation normal form, each subformula
$\phi_1 \Until[I] \phi_2$ has either $I$ open or $\inf(I) = 0$ and $I$
right-open, and each subformula $\phi_1 \dualUntil[I] \phi_2$ has $I$
closed. Then, a formula is in $\textsf{\textup{Closed-}}\MITL$ if it
is the negation of an $\textsf{\textup{Open-}}\MITL$
formula. 
By~\cite{BriEst13}, $\textsf{\textup{Open-}}\MITL$ formulas
(respectively, $\textsf{\textup{Closed-}}\MITL$ formulas) translate to
open (closed) timed automata~\cite{OuaWor03}, i.e., all clock constraints are strict
(non-strict). Two other important fragments of \MTL
considered in the literature consist of \SMTL~\cite{OuaWor07}, where
each subformula $\phi_1 \Until[I] \phi_2$ has $I$ bounded in {negation
  normal form}, and \CFMTL \cite{BouMar07}, where the formula
satisfies the following in negation normal form:
\begin{inparaenum}[(i)]
\item in each subformula $\phi_1 \Until[I] \phi_2$, if $I$ is
  unbounded then $\phi_2 \in \LTL$; and
\item in each subformula $\phi_1 \dualUntil[I] \phi_2$, if $I$ is
  unbounded then $\phi_1 \in \LTL$.
\end{inparaenum}

For all of these logics $\mathsf L$, we can consider several
restrictions.  The restriction in which only the non-strict variants
of the operators ($\wFinally$, $\wGlobally$, \textit{etc.}) are allowed is
denoted by $\mathsf L^{\mathrm{ns}}$. The fragment in which all the
intervals used in the formula are either unbounded, or have a left
endpoint equal to 0 is denoted by $\mathsf L[\Until[0,\infty]]$.  In
this case, the interval $I$ can be replaced by an expression of the
form $\sim c$, with $c\in\mathbb{N}$, and
${\sim}\in\{<,>,\leq,\geq\}$. It is known that
$\MITL[\Until[0,\infty]]$ is expressively equivalent to \ECLfut
\cite{Raskin1999}, which is itself a syntactic fragment of Event-Clock
Logic (\ECL). Finally, $\mathsf L[\Finally[\infty]]$ stands for the
logic where `until' operators only appear in the form of
$\Finally[I]$ or $\Globally[I]$ with intervals $I$ of the shape
$[a,\infty)$ or $(a,\infty)$.

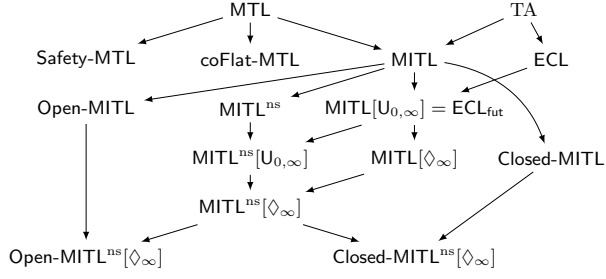
\begin{figure}[t]
\centering
\scalebox{.8}{\begin{tikzpicture}[->, >=latex, shorten >=1pt, transform shape,
  x=.9cm, y=.55cm, step=1]

\node		(ta) at (5, 0) {\TA};

\node		(mtl) at (0, 0)	{\MTL};

\node		(cfmtl) at (0, -1.5) {\CFMTL};

\node		(smtl) at (-3, -1.5) {\SMTL};

\node		(mitl) at (3, -1.5) {\MITL};

\node		(ecl) at (5.5, -1.5) {\ECL};

\node		(wmitl) at (0, -3) {$\MITL^{\mathrm{ns}}$};

\node		(eclfut) at (3, -3) {$\MITL[\Until[0, \infty]] = \ECLfut$};

\node		(weclfut) at (0, -4.5) {$\MITL^{\mathrm{ns}}[\Until[0, \infty]]$};

\node		(umitll) at (3, -4.5) {$\MITL[\Finally[\infty]]$};

\node		(wumitll) at (0, -6) {$\MITL^{\mathrm{ns}}[\Finally[\infty]]$};

\node		(openwumitll) at (-3, -7.5) {$\textsf{Open-}\MITL^{\mathrm{ns}}[\Finally[\infty]]$};
\node		(closedwumitll) at (3, -7.5) {$\textsf{Closed-}\MITL^{\mathrm{ns}}[\Finally[\infty]]$};

\node           (closedmitl) at (5.5,-4.5) {$\textsf{Closed-}\MITL$};
\node           (openmitl) at (-3,-3) {$\textsf{Open-}\MITL$};

\path		(mtl) edge [->] (cfmtl)
			(mtl) edge [->] (smtl)
			(mtl) edge [->] (mitl)
			(ta) edge [->] (mitl)
			(ta) edge [->] (ecl)
			(mitl) edge [->] (wmitl)
			(mitl) edge [->] (eclfut)
			(ecl) edge [->] (eclfut)
			(wmitl) edge [->] (weclfut)
			(eclfut) edge [->] (weclfut)
			(eclfut) edge [->] (umitll)
			(weclfut) edge [->] (wumitll)
			(umitll) edge [->] (wumitll)
			(wumitll) edge [->] (openwumitll)
			(wumitll) edge [->] (closedwumitll)
                        (mitl) edge[->, bend left] (closedmitl)
                        (mitl) edge[->] (openmitl)
                        (closedmitl) edge[->] (closedwumitll)
                        (openmitl) edge[->] (openwumitll)
                        ;

\end{tikzpicture}}
\caption{All the fragments of \MITL for which \BPRS and \BCRS are
  undecidable (hence also \RS and \IRS). $A\rightarrow B$ means that
  $A$ strictly contains $B$.}
\label{fig:decsummary}
\end{figure}

\parag{Symbolic transition systems.} Let $X$ be a finite set of
variables, called clocks. The set $\Guards(X)$ of \emph{clock
  constraints} $g$ over $X$ is defined by:
$g:= \top\mid g\land g \mid x\bowtie c$, where
${\bowtie}\in\{<,\leq,=,\geq,>\}$, $x\in X$ and $c\in\Q^+$.  A
\emph{valuation} over $X$ is a mapping $\nu\colon X\to \R^+$. The
satisfaction of a constraint $g$ by a valuation $\nu$ is defined in
the usual way and noted $\nu \models g$, and $\sem g$ is the set of
valuations $\nu$ satisfying $g$.  For $t\in\R^+$, we let $\nu+t$ be
the valuation defined by $(\nu+t)(x) = \nu(x)+t$ for all $x\in X$. For
$R\subseteq X$, we let $\nu[R\leftarrow 0]$ be the valuation defined
by $(\nu[R\leftarrow 0])(x) = 0$ if $x\in R$, and
$(\nu[R\leftarrow 0])(x) = \nu(x)$ otherwise.

Following the terminology of \cite{DSoMad02,BouBoz06}, a
\emph{granularity} is a triple $\mu = (X,m,K)$ where $X$ is a finite
set of clocks, $m\in\N\setminus\{0\}$, and $K\in\N$. A constraint $g$
is $\mu$-granular if $g\in\Guards(X)$ and each constant in $g$ is of
the form $\frac\alpha m$ with an integer $\alpha \leq K$.  A
\emph{symbolic alphabet} $\Gamma$ based on $(\Sigma,X)$ is a finite
subset of $\Sigma\times \AtomicGuards_{m,K}(X)\times 2^X$, where
$\AtomicGuards_{m,K}(X)$ denotes all atomic $(X,m,K)$-granular clock
constraints (i.e., clock constraints $g$ such that
$\sem g = \sem {g'}$ or $\sem g \cap \sem{g'} = \emptyset$, for every
$(X,m,K)$-granular clock constraint $g'$). Such a symbolic alphabet
$\Gamma$ is said $\mu$-granular. A \emph{symbolic word}
$\gamma = (\sigma_1,g_1,R_1) \cdots (\sigma_n,g_n,R_n)$ over $\Gamma$
generates a set of timed words over $\Sigma$, denoted by $tw(\gamma)$
such that $\sigma\in tw(\gamma)$ if
$\sigma = (\sigma_1,\tau_1)\cdots (\sigma_n,\tau_n)$, and there is a
sequence $(\nu_i)_{0\leq i\leq n}$ of valuations with $\nu_0$ the zero
valuation, and for all $1\leq i\leq n$,
$\nu_{i-1}+\tau_{i}-\tau_{i-1}\models g_{i}$ and
$\nu_i = (\nu_{i-1}+\tau_i-\tau_{i-1})[R_{i}\leftarrow 0]$ (assuming
$\tau_0=0$). Intuitively, each $(\sigma_i,g_i,R_i)$ means that action
$\sigma_i$ is performed, with guard $g_i$ satisfied and clocks in
$R_i$ reset.

A \emph{symbolic transition system} (STS) over a symbolic alphabet
$\Gamma$ based on $(\Sigma,X)$ is a tuple
$\T = (S,s_0,\transitions,S_f)$ where $S$ is a possibly infinite set
of locations, $s_0\in S$ is the initial location,
$\transitions \subseteq S\times \Gamma\times S$ is the transition
relation, and $S_f\subseteq S$ is a set of accepting locations
(omitted if all locations are accepting). An STS with finitely many
locations is a \emph{timed automaton} (\TA)~\cite{AluDil94}. For a
finite path
$\pi = s_1\xrightarrow{b_1} s_2 \xrightarrow{b_2}\cdots
\xrightarrow{b_{n}} s_{n+1}$
of $\T$ (i.e., such that $(s_i,b_i,s_{i+1})\in \transitions$ for all
$1\leq i\leq n$), the \emph{trace} of $\pi$ is the word
$b_1b_2\cdots b_{n}$, and $\pi$ is \emph{accepting} if
$s_{n+1}\in S_f$. We denote by $\Lang(\T)$ the language of $\T$,
defined as the timed words associated to symbolic words that are
traces of finite accepting paths starting in $s_0$.
We say that a timed action $(t,\sigma)\in\R^+\times \Sigma$ is
\emph{enabled} in $\T$ at a pair $(s,\nu)$, denoted by
$(t,\sigma)\in\Enabled_\T(s,\nu)$, if there exists a transition
$(s,(\sigma,g,R),s')\in\dtransitions$ such that $\nu+t\models g$. The
STS $\T$ is \emph{time-deterministic} if there are no distinct
transitions $(s,(\sigma,g_1,R_1),s_1)$ and $(s,(\sigma,g_2,R_2),s_2)$
in $\transitions$ and no valuation $\nu$ such that $\nu\models g_1$
and $\nu\models g_2$.
In a time-deterministic STS $\T = (S,s_0,\dtransitions,S_f)$,
for all timed words $\sigma$, there is at most
one path $\pi$ whose trace $\gamma$ verifies $\sigma\in
tw(\gamma)$.
In that case, we denote by $\dtransitions(s_0,\sigma)$ the unique (if
it exists) pair $(s,\nu)$ (where $s\in S$ and $\nu$ is a valuation)
reached after reading $\sigma\in tw(\gamma)$.

\begin{example}\label{ex:STS}
  A time-deterministic \TA $\Plant$ with a single clock $x$ is
  depicted in \figurename~\ref{fig:tdSTS-plant}.
  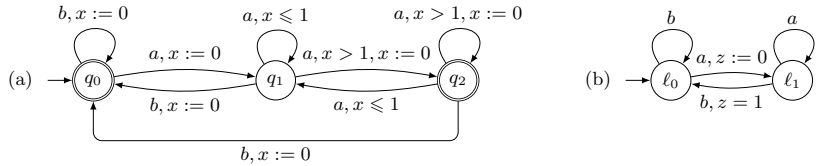
\begin{figure}[tbp]
    \centering
    \scalebox{.8}{\begin{tikzpicture}[->]
        \node[initial left,accepting,state] (0) {$q_0$};
        \node[state,right of=0] (1) {$q_1$};
        \node[accepting,state,right of=1] (2) {$q_2$};

        \path[draw] 
        (0) edge[loopabove] node[above] {$b, x:=0$} (0)
        (1) edge[loopabove] node[above] {$a,x\leq 1$} (1)
        (2) edge[loopabove] node[above] {$a,x> 1,x:=0$} (2)
        (0) edge[bend left=10] node[above] {$a,x:=0$} (1)
        (1) edge[bend left=10] node[below] {$b, x:=0$} (0)
        (1) edge[bend left=10] node[above] {$a,x>1,x:=0$} (2)
        (2) edge[bend left=10] node[below] {$a,x\leq 1$} (1);
        \draw[rounded corners] (2) -- (6,-1) -- node[below] {$b, x:=0$}
        (0,-1) -- (0);
        
        \node[left of=0,node distance=1.2cm] {(a)} ;

        \begin{scope}[node distance=2cm,xshift=9.5cm]
          \node[initial left,state] (0) {$\ell_0$};
          \node[state,right of=0] (1) {$\ell_1$};
          
          \path[draw] 
          (0) edge[loopabove] node[above] {$b$} (0)
          (1) edge[loopabove] node[above] {$a$} (1)
          (0) edge[bend left=10] node[above] {$a,z:=0$} (1)
          (1) edge[bend left=10] node[below] {$b,z=1$} (0);
          
          \node[left of=0,node distance=1.2cm] {(b)} ;
        \end{scope}

      \end{tikzpicture}}
    \caption{(a) A time-deterministic STS $\Plant$ with
      $X=\{x\}$. Instead of depicting a transition per letter
      $(a,g,R)$ (with $g$ atomic), we merge several transitions; e.g.,
      we skip the guard, when all the possible guards are
      admitted. $x:=0$ denotes the reset of $x$. (b) A
      time-deterministic STS $\T$. It is a controller to realise
      $\phi=\Globally(a\Rightarrow \Finally[\leq 1]b)$ with
      plant~$\Plant$.}
    \label{fig:tdSTS-plant}\label{fig:ex-controller}
  \end{figure}
  Intuitively, it accepts all timed words $\sigma$ of the form
  $w_1w_2\cdots w_n$ where each $w_i$ is a timed
  word such that
  \begin{inparaenum}[$(i)$]
  \item either $w_i=(b,\tau)$;
  \item or $w_i$ is a sequence of $a$'s (starting at time stamp
    $\tau$) of duration at most $1$; and $w_{i+1}$ is either of
    the form $(b,\tau')$, or of the form $(a,\tau')$ with $\tau'-\tau>1$.
  \end{inparaenum}
        
\end{example}

\parag{Reactive synthesis with plant.}
To define our reactive synthesis problems, we partition the alphabet
$\Sigma$ into controllable and environment actions $\Sigma_C$ and
$\Sigma_E$. Following \cite{DSoMad02,BouBoz06}, the system is modelled
by a time-deterministic \TA $\Plant = (Q,q_0,\ptransitions,Q_f)$,
called the \emph{plant}\footnote{We assume that for every location $q$ and every
  valuation $\nu$, there exists a timed action
  $(t,\sigma)\in\R^+\times \Sigma$ and a transition
  $(q,(\sigma,g,R),q')\in\ptransitions$ such that $\nu+t\models g$.}.
Observe that the plant has accepting locations: only those runs ending
in a final location of the plant will be checked against the
specification.
We start by recalling the definition of the general \emph{reactive
  synthesis} family of problems (\RS) \cite{AlfFae03,DoyGee09}.  It
consists in a game played by the controller and the environment, that
interact to create a timed word as follows. We start with the empty
timed word, and then, at each round, the controller and the
environment propose timed actions to be performed by the
system---therefore, they must be firable in the plant
$\Plant$---respectively $(t,a)$ and $(t',b)$, with $t,t'\in \R^+$,
$a\in\Sigma_C$ and $b\in\Sigma_E$.  The timed action with the
shortest\footnote{Observe that this is different from
  \cite{DSoMad02,BouBoz06}, where the environment can always prevent
  the controller from playing, even by proposing a longer delay. We
  claim our definition is more reasonable in practice but
  all proofs can be adapted to both definitions.} delay (or
the environment action if the controller decides not to propose any
action) is performed, and added to the current play for the next
round. If both players propose the same delay, we resolve the time
non-deterministically.

On those games, we consider a parameterised family of
reactive synthesis problems denoted $\RS_s^b(\mathcal{F})$, where
$s\in\{u,d\}$; $b\in\{\star,\omega\}$; and $\mathcal{F}$ is one of the
formalisms in \figurename~\ref{fig:decsummary}. An instance of
$\RS_s^b(\mathcal{F})$ is given by a specification $S\in\mathcal{F}$
and a plant $\Plant$, which are interpreted over finite words when
$b=\star$ and infinite words when $b=\omega$. The timed language
$\Lang(S)$ is a specification of desired behaviours when $s=d$ and
undesired behaviours when $s=u$. Then, $\RS_s^b(\mathcal{F})$ asks
whether there exists a strategy for the controller such that all the
words in the outcome of this strategy are in $\Lang(S)$ (or outside
$\Lang(S)$) when we consider desired (or undesired) behaviours (when
$s=\omega$, the definition of $\Lang(S)$ must be the infinite words
one).  If this is the case, we say that $S$ is \emph{(finite-word)
  realisable} for the problem under study.  For example,
$\RS_u^\omega(\MITL)$ is the reactive synthesis problem where the
inputs are a formula of $\MITL$ and a plant, which are interpreted
over the infinite words semantics, and where the $\MITL$ formula
specifies the behaviours that the controller should avoid. 
Unfortunately, the variants \RS are too general, and a winning strategy
might require unbounded memory:
\begin{example}\label{ex:RS}
  Consider the alphabet $\Sigma = \Sigma_C \uplus \Sigma_E$ with
  $\Sigma_C = \{b\}$ and $\Sigma_E = \{a\}$, a plant $\Plant$
  accepting $T\Sigma^\star$, and the specification defined by the \MTL
  formula $\phi = \Globally \big((a\land \Finally[\geq 1] a) \Rightarrow
  \Finally[=1] b\big)$.
  Clearly, a winning strategy for the controller is to remember the
  time stamps $\tau_1,\tau_2,\ldots$ of all $a$'s, and always propose
  to play action~$b$ one time unit later (note that if the environment
  blocks the time to prevent the controller from playing its $b$, the
  controller wins). However this requires to memorise an unbounded
  number of time stamps with a great precision.
\end{example}

\parag{Restrictions on the \RS problem.}
In practice, it makes more sense to restrict the winning strategy of
the controller to be implementable by an STS, which has finitely many
clocks (and if possible finitely many locations). Let us define
formally what it means for an STS $\T = (S,s_0,\dtransitions)$ to
control a plant $\Plant$. We
let $T\Sigma^\star_{\T,\Plant}$ be the \emph{set of timed words
  consistent with $\T$ and $\Plant$}, defined as the smallest set
containing the empty timed word, and closed by the following
operations. Let $\sigma$ be a word in $T\Sigma^\star_{\T,\Plant}$,
with $(q,\nu_\Plant)=\ptransitions(q_0,\sigma)$, $T=0$ if
$\sigma=\varepsilon$, and $(c,T)\in \Sigma\times \R^+$ be the last
letter of $\sigma$ otherwise. Then, we extend $\sigma$ as follows:
\begin{itemize}
\item either the controller proposes to play a controllable action
  $(t,b)$, because it corresponds to a transition that is firable both
  in the controller and the plant. This action can be played ($\sigma$
  is extended by $(b,T+t)$), as well as any environment action
  $(t',a)$ with $t'\leq t$ (the environment can overtake the
  controller). Formally, if $\dtransitions(s_0,\sigma)=(s,\nu)$ is
  defined and
  $\Enabled_\T(s,\nu)\cap \Enabled_\Plant(q,\nu_\Plant)\cap
  (\R^+\times \Sigma_C)\neq \emptyset$:
  for all
  $(t,b)\in \Enabled_\T(s,\nu)\cap\Enabled_\Plant(q,\nu_\Plant)\cap
  (\R^+\times \Sigma_C)$,
  we let $\sigma\cdot (b,T+t)\in T\Sigma^\star_{\T,\Plant}$ and
  $\sigma\cdot (a,T+t')\in T\Sigma^\star_{\T,\Plant}$ for all
  $t'\leq t$ and $a\in\Sigma_E$ such that
  $(t',a)\in\Enabled_\Plant(q,\nu_\Plant)$.
\item Or the controller proposes nothing, then the environment can
  play all its enabled actions. Formally, if
  $\dtransitions(s_0,\sigma)=(s,\nu)$ is defined and
  $\Enabled_\T(s,\nu)\cap \Enabled_\Plant(q,\nu_\Plant)\cap
  (\R^+\times \Sigma_C)=\emptyset$
  and
  $\Enabled_\Plant(q,\nu_\Plant)\cap (\R^+\times \Sigma_E)\neq
  \emptyset$,
  we let $\sigma\cdot (a,T+t')\in T\Sigma^\star_{\T,\Plant}$ for all
  $(t',a)\in \Enabled_\Plant(q,\nu_\Plant)\cap (\R^+\times \Sigma_E)$.
\item Otherwise, we declare that every possible future allowed by the
  plant is valid, i.e., we let
  $\sigma\cdot \sigma'\in T\Sigma^\star_{\T,\Plant}$ for all
  $\sigma\cdot\sigma'\in\Lang(\Plant)$. This happens when the
  controller proposes only actions that are not permitted by the plant
  while the environment has no enabled actions; or when the controller
  lost track of a move of the environment during the past.
\end{itemize}
Then, the \emph{\MTL implementable reactive synthesis} problem
$\IRS^\star_d(\MTL)$ (on finite words and with desired behaviours) is
to decide, given a plant $\Plant$ and a specification given as an \MTL
formula $\phi$, whether there exists a set of clocks $X$, a symbolic
alphabet $\Gamma$ based on $(\Sigma,X)$, and a time-deterministic STS
$\T$ over $\Gamma$ such that
$T\Sigma^\star_{\T,\Plant}\cap\Lang(\Plant)\subseteq
\Lang(\phi)\cup\{\varepsilon\}$.\footnote{Empty word $\varepsilon$ is added for
  convenience, in case it is not already in~$\Lang(\phi)$.}

While the definition of~$\IRS_d^\star(\MTL)$ is more practical than
that of~$\RS_d^\star(\MTL)$, it might still be too general because the
clocks and symbolic alphabet the controller can use are not fixed
\textit{a priori}. In the spirit of \cite{DSoMad02,BouBoz06}, we
define three variants of~\IRS. First, the \emph{ \MTL
  bounded-resources synthesis problem} $\BRRS_d^\star(\MTL)$ is a
restriction of $\IRS_d^\star(\MTL)$ where the granularity of the
controller is fixed: given an \MTL formula $\phi$, and a granularity
$\mu=(X,m,K)$, it asks whether there exists a $\mu$-granular symbolic
alphabet $\Gamma$ based on $(\Sigma,X)$, and a time-deterministic STS
$\T$ over $\Gamma$ such that
$T\Sigma^\star_{\T,\Plant}\cap\Lang(\Plant)\subseteq
\Lang(\phi)\cup\{\varepsilon\}$.
Second, the less restrictive \emph{\MTL bounded-precision synthesis
  problem} $\BPRS_d^\star(\MTL)$ and \emph{\MTL bounded-clocks
  synthesis problem} $\BCRS_d^\star(\MTL)$ are the variants of \IRS
where \emph{only} the precision and \emph{only} the number of clocks
are fixed, respectively. Formally, $\BPRS_d^\star(\MTL)$ asks, given
an \MTL formula $\phi$, $m\in \N$, and $K\in\N\setminus\{0\}$, whether
there are a finite set $X$ of clocks, an $(X,m,K)$-granular symbolic
alphabet $\Gamma$ based on $(\Sigma,X)$, and a time-deterministic STS
$\T$ over $\Gamma$ such that
$T\Sigma^\star_{\T,\Plant}\cap\Lang(\Plant)\subseteq
\Lang(\phi)\cup\{\varepsilon\}$.
$\BCRS_d^\star(\MTL)$ is defined similarly with an \MTL formula
$\phi$, and a finite set of clocks $X$ (instead of $m$, $K$) as input.

While we have defined \IRS, \BPRS, \BCRS and \BRRS for \MTL
requirements, and in the finite words, desired behaviours case only,
these definitions extend to all the other cases we have considered for
\RS: infinite words, undesired behaviours, and all fragments of
\MTL. We rely on the same notations as for \RS, writing for instance
$\BPRS_u^\star(\MITL)$ or $\BCRS_d^\omega(\CFMTL)$, etc.

\begin{example}
  Consider the instance of $\IRS_d^\star(\MITL)$ where the plant
  accepts $T\Sigma^\star$ and the specification is
  $\phi=\Globally(a\Rightarrow \Finally[\leq 1]b))$. This instance is
  negative ($\phi$ is not realisable), since, for every
  time-deterministic STS $\T$, $(a,0)\in T\Sigma^\star_{\T,\Plant}$
  but is not in $\Lang(\phi)$. However, if we consider now the plant
  $\Plant$ in \figurename~\ref{fig:tdSTS-plant}(a), we claim that the STS
  $\T$ with one clock $z$ depicted in
  \figurename~\ref{fig:ex-controller}(b) realises $\phi$. Indeed, this
  controller resets its clock $z$ each time it sees the first $a$ in a
  sequence of $a$'s, and proposes to play a $b$ when $z$ has value
  $1$, which ensures that all $a$'s read so far are followed by a $b$
  within $1$ time unit. The restrictions enforced by the plant (which
  can be regarded as a sort of fairness condition) ensure that this is
  sufficient to realise $\phi$ for $\IRS_d^\star(\MITL)$. This also
  means that $\phi$ is realisable for $\BPRS_d^\star(\MITL)$ with
  precision $m=1$ and $K=1$; for $\BCRS_d^\star(\MITL)$ with set of
  clocks $X=\{z\}$; and for $\BRRS_d^\star(\MITL)$ with granularity
  $\mu=(\{z\},1,1)$.
\end{example}

\section{\BPRS and \BCRS are undecidable}\label{sec:undecidability}
Let us show that all the variants of \BPRS and \BCRS are undecidable,
whatever formalism from \figurename~\ref{fig:decsummary} we consider
for the specification. This entails that all variants of \RS and \IRS
are undecidable too (in particular $\RS_d^\star(\ECL)$ which settles
an open question of~\cite{DoyGee09} negatively). To this aim, we show
undecidability on the weakest formalisms in
\figurename~\ref{fig:decsummary}, namely: \CFMTL, \SMTL,
$\textsf{Open-}\MITL^{\mathrm{ns}}[\Finally[\infty]]$ and
$\textsf{Closed-}\MITL^{\mathrm{ns}}[\Finally[\infty]]$.  Similar
results have been shown for \MTL (and for \SMTL as desired
specifications) in~\cite{BouBoz06} via a reduction from the halting
problem for deterministic channel machines, but their proof depends
crucially on \emph{punctual} formulas of the form
$\wGlobally (a \Rightarrow \wFinally[=1] b)$ which are not expressible
in \MITL.  Our original contribution here is to adapt these ideas to a
formalism without punctual constraints, which is non-trivial.

\parag{Deterministic channel machines.}
A \emph{deterministic channel machine} (DCM)
$\mathcal{S} = \langle S, s_0, s_\textit{halt}, M, \Delta \rangle$ can
be seen as a finite automaton equipped with an unbounded fifo channel,
where $S$ is a finite set of states, $s_0$ is the initial state,
$s_\textit{halt}$ is the halting state, $M$ is a finite set of
messages and
$\Delta \subseteq S \times \{m!, m? \mid m \in M \} \times S$ is the
transition relation satisfying the following \emph{determinism}
hypothesis:
\begin{inparaenum}[($i$)]
\item $(s, a, s') \in \Delta$ and $(s, a, s'') \in \Delta$ implies
  $s' = s''$;
\item if $(s, m!, s') \in \Delta$ then it is the only
  outgoing transition from $s$.
\end{inparaenum}

The semantics is described by a graph $G(\mathcal{S})$ with nodes
labelled by $(s, x)$ where $s \in S$ and $x \in M^\star$ is the channel
content.  The edges in $G(\mathcal{S})$ are defined as follows:
\begin{inparaenum}[($i$)]
\item $(s, x) \xrightarrow{m!} (s', xm)$ if $(s, m!, s') \in \Delta$; and
\item $(s, mx) \xrightarrow{m?} (s', x)$ if $(s, m?, s') \in \Delta$.
\end{inparaenum}
Intuitively, these correspond to messages being \emph{written to} or
\emph{read from} the channel.  A \emph{computation} of $\mathcal{S}$
is then a path in $G(\mathcal{S})$.  The \emph{halting problem} for
DCMs asks, given a DCM $\mathcal{S}$, whether there is a computation
from $(s_0, \varepsilon)$ to $(s_\textit{halt}, x)$ in $G(\mathcal{S})$
for some $x \in M^\star$.
\begin{proposition}[\cite{Brand83}]
The halting problem for DCMs is undecidable.
\end{proposition}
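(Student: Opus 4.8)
The plan is to prove undecidability of the halting problem for deterministic channel machines by reduction from a problem already known to be undecidable, namely the halting problem for two-counter (Minsky) machines.

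The key observation is that a reliable FIFO channel is powerful enough to store an entire Turing machine configuration, and that the determinism hypothesis on DCMs is precisely strong enough to make the simulation faithful. First I would fix a deterministic Turing machine $T$ and encode a configuration (tape contents, head position, control state) as a single word over an extended message alphabet $M$: the channel holds the non-blank portion of the tape together with a distinguished marker symbol recording the head position, while the DCM's finite control remembers the current $T$-state. The reduction maps $T$ to a DCM $\mathcal{S}$ so that $T$ halts on the blank input iff $\mathcal{S}$ admits a computation from $(s_0, \varepsilon)$ to $(s_\textit{halt}, x)$ for some $x \in M^\star$.

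The heart of the construction is to simulate a single step of $T$ by one full \emph{rotation} of the channel: the DCM repeatedly reads a message from the front (an $m?$ transition) and writes it back at the rear (an $m!$ transition), cycling the whole content around exactly once. While rotating, the finite control detects the marker and, using a constant amount of local memory about the symbols adjacent to the head, performs the local rewrite and head move dictated by $T$ before re-writing the content. This respects both determinism requirements: reads branch on which message is dequeued (several $m?$ transitions from one state, for distinct messages, are allowed by hypothesis~($i$)), whereas each write is a forced, unique outgoing transition, matching hypothesis~($ii$). The DCM enters $s_\textit{halt}$ exactly when the simulated control reaches the halting state of $T$.

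The main obstacle, and the only place where care is genuinely needed, is guaranteeing that the rotation bookkeeping uses only finitely many control states while the channel content grows without bound. This is handled by making the finite control track merely a bounded window around the marker—the symbol under the head and whether the marker has already been processed in the current rotation—never the full content; the unbounded information lives entirely in the channel. One must also ensure that each simulated step corresponds to exactly one complete rotation, so a second end-of-rotation marker is threaded through the channel word to detect when the cycle is finished. Reliability of the channel—messages are neither lost nor reordered, as the FIFO semantics guarantees—is precisely what makes the encoded configuration persist faithfully across rotations; it is exactly this feature that separates DCMs from \emph{lossy} channel machines and yields undecidability here rather than the weaker decidability enjoyed in the lossy setting.
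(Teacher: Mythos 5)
The paper never proves this proposition at all: it is imported wholesale, as a citation, from Brand and Zafiropulo~\cite{Brand83}, so there is no in-paper argument to compare against. Your reduction is, in substance, exactly the standard proof underlying that citation: store the Turing machine configuration in the FIFO channel, simulate one machine step by one full rotation of the channel content (read at the front, write at the rear), use a head marker plus an end-of-rotation marker, and keep only a bounded window of symbols in the finite control so that left moves and tape growth can be handled. The construction also respects the DCM determinism hypotheses in the right way: read transitions may branch on which message is dequeued (condition~($i$)), while each write is a forced unique outgoing transition (condition~($ii$)), and the machine reaches $s_\textit{halt}$ iff the simulated machine halts. Two small blemishes, neither fatal: first, you announce a reduction from two-counter Minsky machines but then actually reduce from the halting problem for deterministic Turing machines --- either source works, but the opening sentence should match the construction you give; second, it would be worth one sentence arguing that the simulating DCM never blocks (every read state has an outgoing read transition for every message that can actually appear at the head of a well-formed encoding), since blocking is one of the three terminal behaviours the paper distinguishes and your reduction needs halting of $T$ to be the only way the DCM's computation can stop.
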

It should be clear that $\mathcal{S}$ has a unique computation.
Without loss of generality, we assume that
$s_\textit{halt}$ is the only state in $S$ with no outgoing transition.
It follows that exactly one of the following must be true:
\begin{inparaenum}[($i$)]
\item $\mathcal{S}$ has a halting computation; 
\item $\mathcal{S}$ has
  an infinite computation not reaching $s_\textit{halt}$;
\item $\mathcal{S}$ is blocking at some point, i.e., $\mathcal{S}$ is
  unable to proceed at some state $s \neq s_\textit{halt}$ (with only
  \emph{read} outgoing transitions) either because the channel is
  empty or the message at the head of the channel does not match any
  of the outgoing transitions from $s$.
\end{inparaenum}

\parag{Finite-word reactive synthesis for \MITL.}
We now give a reduction from the halting problem for DCMs to
$\RS^\star_d(\MITL)$.  The idea is to devise a suitable \MITL formula
such that in the corresponding timed game, the environment and the
controller are forced to propose actions in turn, according to the
semantics of the DCM.  Each prefix of the (unique) computation of the
DCM is thus encoded as a play, i.e., a finite timed word.  More
specifically, given a DCM $\mathcal{S}$, we require each play to
satisfy the following conditions:
\begin{itemize}
\item[\textsf{C1}] The action sequence of the play (i.e., omitting all
  timestamps) is of the form
  ${\textit{Nil}_\textit{C}}^\star\,s_0 a_0 s_1 a_1 \cdots$ where
  $\textit{Nil}_\textit{C}$ is a special action of the controller and
  $(s_i, a_i, s_{i+1}) \in \Delta$ for each $i \geq 0$.
\item[\textsf{C2}] Each $s_i$ comes with no delay and no two
  \emph{write} or \emph{read} actions occur at the same time, i.e., if
  $(a_i, \tau)(s_{i+1}, \tau')(a_{i+1}, \tau'')$ is a substring of the
  play then $\tau = \tau'$ and $\tau < \tau''$.
\item[\textsf{C3}] Each $m?$ is preceded exactly 1 time unit (t.u.)
  earlier by a corresponding $m!$.
\item[\textsf{C4}] Each $m!$ is followed exactly 1 t.u.\ later by a
  corresponding $m?$ if there are actions that occur at least 1
  t.u.\ after the $m!$ in question.
\end{itemize}
To this end, we construct a formula of the form
$\Phi \Rightarrow \Psi$ where $\Phi$ and $\Psi$ are conjunctions of
the conditions that the environment and the controller must adhere to,
respectively.  In particular, the environment must propose $s_i$'s
according to the transition relation (\textsf{C1} and \textsf{C2})
whereas the controller is responsible for proposing
$\{m!, m? \mid m \in M\}$ properly so that a correct encoding of the
writing and reading of messages is maintained (\textsf{C2},
\textsf{C3}, and \textsf{C4}).  When both players obey these
conditions, the play faithfully encodes a prefix of the computation of
$\mathcal{S}$, and the controller wins the play.  If the environment
attempts to ruin the encoding, the formula will be satisfied, i.e.,
the play will be winning for the controller.  Conversely, if the
controller attempts to cheat by, say, reading a message that is not at
the head of the channel, the environment can pinpoint this error (by
proposing a special action $\checkl$) and falsify the formula, i.e.,
the play will be losing for the controller.  In what follows, let
$\Sigma_\textit{E} = S \cup \{\textit{Check}^\leftarrow, \checkr,
\textit{Lose}, \textit{Nil}_\textit{E} \}$,
$\Sigma_\textit{C} = \{ m!, m? \mid m \in M \} \cup \{\textit{Win},
\textit{Nil}_\textit{C}\}$,
$\phi_\textit{E} = \bigvee_{e \in \Sigma_\textit{E}} e$,
$\phi_\textit{C} = \bigvee_{c \in \Sigma_\textit{C}} c$,
$\phi_S = \bigvee_{s \in S} s$,
$\phi_\textit{W} = \bigvee_{m \in M} m!$,
$\phi_\textit{R} = \bigvee_{m \in M} m?$ and
$\phi_\textit{WR} = \phi_\textit{W} \vee \phi_\textit{R}$. Let us now
present the formulas $\phi_1,\phi_2,\ldots$ and $\psi_1,\psi_2,\ldots$
needed to define $\Phi$ and $\Psi$.

We start by formulas enforcing condition \textsf{C1}. The play should
start from $s_0$, alternate between $E$-actions and $C$-actions, and
the controller can win the play if the environment does not proceed
promptly, and vice versa for the environment:
\begin{align*}
  \phi_1 &= \neg \big(\textit{Nil}_\textit{C} \wUntil (\phi_\textit{E} \wedge \neg s_0)\big) 
  & \psi_1 & = \neg \big(\textit{Nil}_\textit{C} \wUntil
             (\phi_\textit{C} \wedge \neg
             \textit{Nil}_\textit{C})\big) \\
  \phi_2 & = \neg \wFinally (\phi_\textit{E} \wedge \Next[\leq 1]
           \phi_\textit{E})  
  &  \psi_2 & = \neg \wFinally (\phi_\textit{C} \wedge \Next[\leq 1]
              \phi_\textit{C}) \\
  \phi_3 & = \neg \wFinally (\phi_\textit{WR} \wedge \Next \textit{Win}) 
  & \psi_3 & = \neg \wFinally (\phi_S \wedge \neg s_\textit{halt}
             \wedge \Next \textit{Lose}) \,. 
\end{align*}
Both players must also comply to the semantics of $\mathcal{S}$:
\begin{align*}
  \phi_4 & = \hspace{-5mm} \bigwedge_{\substack{(s, a, s') \in \Delta \\ b \notin \{s', \textit{Check}^\leftarrow, \checkr\}}} \hspace{-5mm}\neg \wFinally (s \wedge \Next a \wedge \Next \Next b) \hspace{5mm}
         & \psi_4 & = \hspace{-5mm}\bigwedge_{\substack{s \neq s_\textit{halt} \\
  \forall s'\, (s, a, s') \notin \Delta}} \hspace{-5mm} \neg \wFinally ( s \wedge \Next a ) \,.
\end{align*}
Once the encoding has ended, both players can only propose
$\textit{Nil}$ actions:
\begin{IEEEeqnarray*}{rCl}
\phi_5 & = & \neg \wFinally \big((s_\textit{halt} \vee \checkl \vee  \textit{Check}^\rightarrow \vee \textit{Lose} \vee  \textit{Win}) \wedge \Finally (\phi_\textit{E} \wedge \neg \textit{Nil}_\textit{E}) \big) \\
\psi_5 & = & \neg \wFinally \big((s_\textit{halt} \vee \checkl \vee  \textit{Check}^\rightarrow \vee \textit{Lose} \vee \textit{Win}) \wedge \Finally (\phi_\textit{C} \wedge \neg \textit{Nil}_\textit{C}) \big) \,.
\end{IEEEeqnarray*}

For condition \textsf{C2}, we simply state that the environment can
only propose delay $0$ whereas the controller always proposes a
positive delay:
\begin{align*}
  \phi_6 & =  \neg \wFinally (\phi_\textit{WR} \wedge \Next[>0]
           \phi_\textit{E}) \quad
  &
    \psi_6 & =  \wGlobally (\phi_S \wedge \neg s_\textit{halt} \wedge
             \Next \phi_\textit{WR} \implies \Next[>0]
             \phi_\textit{WR}) \,. 
\end{align*}

Let us finally introduce formulae to enforce conditions \textsf{C3}
and \textsf{C4}. Note that a requirement like `every write is matched
by a read \emph{exactly} one time unit later' is easy to express in
\MTL, but not so in \MITL. Nevertheless, we manage to translate
\textsf{C3} and \textsf{C4} in \MITL by exploiting the game
interaction between the players. Intuitively, we allow the cheating
player to be punished by the other. Formally, to ensure \textsf{C3},
we allow the environment to play a $\checkl$ action after any $m?$ to
check that this read has indeed occurred 1 t.u.\ after the
corresponding $m!$. Assuming such a $\checkl$ has occurred, the
controller must enforce:
\begin{IEEEeqnarray*}{rCl}
\psi^\leftarrow & = & \bigvee_{m \in M} \wFinally \big( m! \wedge \wFinally[\leq 1] (m? \wedge \Next \checkl)
 \wedge \wFinally[\geq 1] (m? \wedge \Next \checkl) \big) \,.
\end{IEEEeqnarray*}
Now, to ensure \textsf{C4}, the environment may play a $\checkr$
action at least 1 t.u.\ after a write on the channel. If this
$\checkr$ is the first action that occurs more than 1~t.u.\ after the
writing (expressed by the formula $\psifst$), we must check that the
writing has been correctly addressed, i.e., there has been an action
exactly 1~t.u.\ after, \emph{and} this action was the corresponding
reading:
\begin{IEEEeqnarray*}{rCl}
\psifst & = & \wFinally
( \phi_\textit{W} \wedge \wFinally[<1] \theta_{1}^\rightarrow
\wedge \wFinally[\geq 1] \theta_{0}^\rightarrow ) \\
\psi^\rightarrow & = & \neg \wFinally ( \phi_\textit{W} \wedge \wFinally[<1] \theta_{1}^\rightarrow
\wedge \wFinally[> 1] \theta_{0}^\rightarrow ) \wedge \psi^\leftarrow[\checkr/\checkl] 
\end{IEEEeqnarray*}
where $\psi^\leftarrow[\checkr/\checkl]$ is the formula obtained by
replacing all $\checkl$ with $\checkr$ in $\psi^\leftarrow$,
$\theta^\rightarrow_0 = \phi_\textit{WR} \wedge \Next \checkr$ and
$\theta^\rightarrow_1 = \phi_\textit{WR} \wedge \Next \phi_S \wedge \Next \Next \theta^\rightarrow_0$.
In the overall, we consider:
\begin{IEEEeqnarray*}{rCl}
\phi_7 & = & \bigwedge_{m \in M} \neg \wFinally (m! \wedge \Next \textit{Check}^\leftarrow) \\
\psi_7 & = & (\wFinally \checkl \Rightarrow \psi^\leftarrow) \wedge \big((\wFinally \textit{Check}^\rightarrow \wedge \psifst) \Rightarrow \psi^\rightarrow\big) \,.
\end{IEEEeqnarray*}

Now let $\Phi = \bigwedge_{1 \leq i \leq 7} \phi_i$,  $\Psi = \bigwedge_{1 \leq i \leq 7} \psi_i$
and $\Omega = \Phi \Rightarrow \Psi$.

\begin{proposition}\label{prop:realizability}
  $\Omega$ is finite-word realisable if and only if either ($i$)
  $\mathcal{S}$ has a halting computation, or ($ii$) $\mathcal{S}$ has
  an infinite computation not reaching
  $s_\textit{halt}$.\footnote{Observe that the proof does not require
    any plant (or uses the trivial plant accepting $T\Sigma^\star$). This
    entails undecidability of the `realisability problem', which is
    more restrictive than $\RS_d^\star$ and another difference with
    respect to the proof in \cite{BouBoz06}.}
\end{proposition}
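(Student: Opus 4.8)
The plan is to prove both implications by setting up a tight correspondence between plays of the game and prefixes of the unique computation of $\mathcal{S}$. The guiding idea is that the dense time domain simulates the fifo channel: a message written (action $m!$) at time $\tau$ is to be read (action $m?$) at time $\tau+1$, so every message resides in the channel for exactly one time unit and messages are therefore necessarily read in the order they were written. Conditions \textsf{C1}--\textsf{C4}, once enforced by $\Phi$ and $\Psi$, force any play in which both players honour their obligations to encode faithfully a prefix of the computation of $\mathcal{S}$. I would first record this as an \emph{encoding lemma}: if the environment satisfies $\Phi$ and the controller satisfies $\Psi$, then the action sequence reads $\textit{Nil}_\textit{C}^\star\, s_0 a_0 s_1 a_1\cdots$ with $(s_i,a_i,s_{i+1})\in\Delta$, the states carry no delay, consecutive write/read actions carry positive delays (\textsf{C2}), and each $m?$ sits exactly one time unit after the matching $m!$ (\textsf{C3}--\textsf{C4}).

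For \emph{sufficiency}, suppose $\mathcal{S}$ has a halting or an infinite computation, so it never blocks. I would define a controller strategy that simulates this computation: after the environment proposes a state $s_i$, the controller proposes the unique action $a_i$ dictated by the transition out of $s_i$ (stopping once $s_\textit{halt}$ is reached, if ever), scheduling writes at strictly increasing timestamps with small positive delays (to meet $\psi_6$ and \textsf{C2}) and scheduling each read exactly one time unit after its matching write. Because time is dense, arbitrarily many writes fit inside a single time unit, so the unbounded channel can always be scheduled consistently, however large it grows. This strategy is winning for finite-word realisability: for every finite play consistent with it, either the environment has violated some $\phi_i$, so the antecedent $\Phi$ fails and $\Omega$ holds vacuously, or the play is a correct prefix encoding and, since the controller keeps exact write--read timing, every $\psi_i$ holds---in particular the check gadgets $\psi^\leftarrow$ and $\psi^\rightarrow$ are satisfied should the environment challenge the play with $\checkl$ or $\checkr$.

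For \emph{necessity}, I would argue the contrapositive: if $\mathcal{S}$ blocks (case~(iii)), no controller strategy wins. Fix any controller strategy. The environment plays the states of the unique (finite) computation with zero delay, thereby respecting $\Phi$ and forcing the controller to simulate $\mathcal{S}$ step by step. The essential point is that the controller cannot cheat undetectably: by $\psi_4$ it may only fire actions labelling genuine transitions, and by the exact one-time-unit timing the only message it can correctly read at a given instant is the head of the channel; any read of the wrong message, or any write left unmatched, breaks the timing, and the environment exposes it by playing a single $\checkl$ (respectively $\checkr$) right after the offending action, falsifying $\psi^\leftarrow$ (respectively $\psi^\rightarrow$) and hence $\psi_7$. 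The controller is thus compelled to follow the computation faithfully until it reaches the blocking state $s\neq s_\textit{halt}$, where it has no legal move: proposing $\textit{Nil}_\textit{C}$ violates $\psi_1$, while any write/read violates $\psi_4$ or a timing check. In every branch $\Phi$ holds while $\Psi$ fails, so $\Omega$ is falsified and the specification is not realisable.

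The step I expect to be the main obstacle is the correctness of the check gadgets $\psi^\leftarrow$ and $\psi^\rightarrow$, which must encode the \emph{punctual} requirements \textsf{C3} and \textsf{C4} using only the non-punctual modalities of \MITL. The decisive observation is that a single check action terminates the encoding (by $\phi_5$ and $\psi_5$), so the play contains exactly one challenged read or write. For \textsf{C3}, the two existential modalities $\wFinally[\leq 1]$ and $\wFinally[\geq 1]$ of $\psi^\leftarrow$ are then forced to point at that unique position, sandwiching the challenged $m?$ to lie exactly one time unit after the matching $m!$. For \textsf{C4}, I must additionally analyse how $\psifst$ selects the first action occurring at least one time unit after a write, and how the bound $\wFinally[>1]$ negated in $\psi^\rightarrow$ complements the $\wFinally[\geq 1]$ of $\psifst$ to pin that action at exactly one time unit and identify it as the matching read. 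Verifying these two pinpointing arguments is precisely where the reduction departs from the punctual \MTL encoding of \cite{BouBoz06}, and where I expect the argument to be most delicate.
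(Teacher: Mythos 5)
Your overall architecture is the same as the paper's: an explicit controller strategy that simulates the computation with exact write--read timing for sufficiency, an explicit environment counter-strategy for necessity, and the ``sandwich'' reading of the check gadget (since $\phi_5$/$\psi_5$ make the challenged position unique, $\wFinally[\leq 1]$ and $\wFinally[\geq 1]$ in $\psi^\leftarrow$ must both point at it, pinning it at exactly one time unit). That part is right. However, your necessity argument has a genuine hole: you never explain what forces the controller to propose anything at all. Realisability only requires that every play consistent with the controller's strategy satisfy $\Omega$; a controller that simply proposes nothing (or only delays longer than the environment's) after a state $s$ produces plays ending in $s$, and such plays satisfy all the $\psi_i$ as well as all the $\phi_i$, hence $\Omega$. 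So ``the environment plays the states of the computation with zero delay'' does not by itself yield a violating play against every strategy. This is precisely what the actions $\textit{Lose}$ and $\textit{Win}$ and the formulas $\psi_3$, $\phi_3$ exist for, and your proof never invokes them: in the paper's counter-strategy the environment proposes $(1.1, \textit{Lose})$ at each state; the delay $1.1 > 1$ keeps $\phi_2$ (hence $\Phi$) true, so if the controller does not supersede with a properly timed action, the play ending in $s\,\textit{Lose}$ violates $\psi_3$ and falsifies $\Omega$. Only under this threat is the controller ``compelled'' to move, after which your case analysis (cheating exposed by $\checkl$/$\checkr$, no legal action at the blocking state) goes through.

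Relatedly, your claim that at the blocking state ``proposing $\textit{Nil}_\textit{C}$ violates $\psi_1$'' is wrong: $\psi_1$ only constrains the prefix before the first environment action and can no longer be falsified once $s_0$ has been played. Playing $\textit{Nil}_\textit{C}$ immediately after a state $s$ in fact violates $\psi_4$, since $(s, \textit{Nil}_\textit{C}, s') \notin \Delta$ for every $s'$, so the conjunct $\neg \wFinally (s \wedge \Next \textit{Nil}_\textit{C})$ applies; and the remaining escape (proposing nothing) is again defeated only by the $\textit{Lose}$/$\psi_3$ mechanism above. On the sufficiency side your omission of the symmetric $\textit{Win}$/$\phi_3$ device is harmless in the finite-word setting, since faithful prefixes already satisfy all the $\psi_i$ when the environment stalls, but the necessity direction cannot be completed without $\textit{Lose}$.
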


\begin{proof}[Sketch]
  If ($i$) or ($ii$) is true, $\Omega$ can be realised by the controller
  faithfully encoding a computation of $\mathcal{S}$.  If $E$ proposes
  $\checkl$ or $\checkr$, the play will satisfy $\psi_7$. Otherwise,
  if $\mathcal{S}$ has an infinite computation not reaching
  $s_\textit{halt}$, the play can grow unboundedly and will satisfy
  all $\psi$'s, hence $\Omega$.

  Conversely, if $\mathcal{S}$ is blocking, then $\Omega$ is not
  realisable. Indeed, either the controller encodes $\mathcal{S}$
  correctly, but then at some point it will not be able to propose any
  action, and will be subsumed by the environment that will play
  $\textit{Lose}$.  Or the controller will try to cheat, by (1)
  inserting an action $m?$ not matched by a corresponding $m!$ 1 t.u.\
  earlier, or (2) writing a message $m!$ that will not be read 1 t.u.\
  later. For the first case, the environment can then play $\checkl$
  right after the incorrect $m?$, and the play will violate
  $\psi^\leftarrow$, hence $\psi_7$ and $\Omega$. For the second case,
  the environment will play $\checkr$ after the first action occurring
  1 t.u.\ after the unfaithful $m!$ and the play will violate
  $\psi^\rightarrow$.\qed
\end{proof}

Now let
$\Omega' = \Phi \Rightarrow \Psi \wedge \Globally (\neg
s_\textit{halt})$,
i.e., we further require the computation not to reach
$s_\textit{halt}$. The following proposition can be proved almost
identically.
\begin{proposition}
  $\Omega'$ is finite-word realisable if and only if $\mathcal{S}$ has
  an infinite computation not reaching $s_\textit{halt}$.
\end{proposition}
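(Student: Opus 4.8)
The plan is to adapt the proof of Proposition~\ref{prop:realizability} by showing that the additional conjunct $\Globally(\neg s_\textit{halt})$ in the consequent precisely rules out case ($i$)---the halting case---while leaving case ($ii$) intact. Recall that $\Omega = \Phi \Rightarrow \Psi$ is realisable iff $\mathcal{S}$ halts \emph{or} has an infinite non-halting computation; since $\mathcal{S}$ has a unique computation and exactly one of the three mutually exclusive alternatives (halting, infinite non-halting, blocking) holds, it suffices to argue that $\Omega' = \Phi \Rightarrow (\Psi \wedge \Globally(\neg s_\textit{halt}))$ is realisable iff the infinite non-halting alternative holds.

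First I would treat the ``if'' direction. Suppose $\mathcal{S}$ has an infinite computation not reaching $s_\textit{halt}$. I use exactly the winning strategy for the controller from the proof of Proposition~\ref{prop:realizability}: the controller faithfully encodes the (unique, infinite) computation of $\mathcal{S}$. As argued there, if the environment deviates by playing $\checkl$ or $\checkr$, the play satisfies $\psi_7$ and hence $\Omega$; otherwise the encoding proceeds forever, satisfying every $\psi_i$. The only new obligation is the conjunct $\Globally(\neg s_\textit{halt})$: but since the encoded computation never reaches $s_\textit{halt}$, the state $s_\textit{halt}$ is never proposed along any consistent play, so $\Globally(\neg s_\textit{halt})$ holds trivially. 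Thus the same strategy realises $\Omega'$.

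Next I would handle the ``only if'' direction, which splits according to the two remaining alternatives for $\mathcal{S}$. If $\mathcal{S}$ is blocking, then exactly as in Proposition~\ref{prop:realizability} no strategy can realise $\Psi$ alone (the controller is eventually subsumed by a $\textit{Lose}$, or is caught cheating by a $\checkl$/$\checkr$ check), and a fortiori none can realise the stronger consequent $\Psi \wedge \Globally(\neg s_\textit{halt})$; hence $\Omega'$ is not realisable. The genuinely new case is when $\mathcal{S}$ \emph{halts}: here I must show that the extra conjunct now defeats the controller. The key observation is that any winning strategy for $\Omega'$ must, against an environment that plays faithfully (never issuing $\checkl$, $\checkr$, or a premature $\textit{Lose}$), produce a play in which $\Phi$ holds; by the soundness of the encoding (conditions \textsf{C1}--\textsf{C4}, enforced mutually by $\Phi$ and $\Psi$) this play encodes a genuine prefix of the unique computation of $\mathcal{S}$. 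Since that computation halts, the faithful environment will eventually propose $s_\textit{halt}$, at which point $\Globally(\neg s_\textit{halt})$ is violated, so the consequent fails while $\Phi$ holds---making the play losing for the controller. The only escape would be for the controller to refuse to encode faithfully, but then (as in Proposition~\ref{prop:realizability}) the environment punishes the cheat via $\psi^\leftarrow$ or $\psi^\rightarrow$, again falsifying the consequent.

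The main obstacle I anticipate is making the halting argument fully rigorous, namely ensuring that a faithfully-playing environment can \emph{force} the state $s_\textit{halt}$ to appear against \emph{every} controller strategy that keeps $\Phi$ true. This requires checking that the environment's freedom---proposing the correct next state $s_{i+1}$ with delay $0$ per \textsf{C1} and \textsf{C2}, and optionally holding back the $\checkl$/$\checkr$ and $\textit{Lose}$ actions---suffices to drive the encoding forward to the halting state no matter how the controller responds, using the determinism of $\mathcal{S}$ (so the continuation is forced) and the fact that a cheating controller is independently punished. I expect this to go through by reusing the structural arguments already present in the sketch of Proposition~\ref{prop:realizability} essentially verbatim, which is why the authors remark that the proposition ``can be proved almost identically''.
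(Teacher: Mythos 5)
Your proposal is correct and follows exactly the route the paper intends: the paper's own justification is simply that the proposition ``can be proved almost identically'' to Proposition~\ref{prop:realizability}, and your adaptation---reusing the faithful controller strategy for the ``if'' direction (noting $\Globally(\neg s_\textit{halt})$ holds vacuously since a $\Phi$-compliant environment can only propose states along the non-halting computation), keeping the blocking case unchanged, and in the halting case letting a faithful environment drive the encoding forward until it legally proposes $s_\textit{halt}$, falsifying the strengthened consequent while $\Phi$ holds---is precisely that adaptation.
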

\begin{corollary}\label{cor:halting}
  $\mathcal{S}$ has a halting computation if and only if $\Omega$ is
  finite-word realisable but $\Omega'$ is not finite-word realisable.
\end{corollary}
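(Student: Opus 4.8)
The plan is to derive the corollary as a purely propositional consequence of the two preceding propositions together with the structural trichotomy established for the unique computation of $\mathcal{S}$. Abbreviate by $H$ the statement ``$\mathcal{S}$ has a halting computation'' and by $I$ the statement ``$\mathcal{S}$ has an infinite computation not reaching $s_\textit{halt}$''. Proposition~\ref{prop:realizability} then reads: $\Omega$ is finite-word realisable iff $H \vee I$. The subsequent proposition reads: $\Omega'$ is finite-word realisable iff $I$, so that $\Omega'$ is \emph{not} finite-word realisable iff $\neg I$.

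First I would invoke the structural trichotomy recorded just after the undecidability of the halting problem is stated: because $\mathcal{S}$ is deterministic and $s_\textit{halt}$ is assumed to be the only state without an outgoing transition, exactly one of $H$, $I$, and ``$\mathcal{S}$ is blocking'' holds. In particular $H$ and $I$ are mutually exclusive, which yields the implication $H \Rightarrow \neg I$ that drives the argument.

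It then remains to simplify the right-hand side of the claimed equivalence. Using the two propositions, ``$\Omega$ is realisable and $\Omega'$ is not realisable'' is equivalent to $(H \vee I) \wedge \neg I$. Distributing the conjunction over the disjunction gives $(H \wedge \neg I) \vee (I \wedge \neg I)$; the second disjunct is false, and by the mutual exclusivity above $H \wedge \neg I$ coincides with $H$. Hence the right-hand side is equivalent to $H$, which is exactly the left-hand side of the corollary.

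I do not expect any genuine obstacle here: all the mathematical content is already carried by the two propositions, and the corollary is obtained by elementary Boolean reasoning once the mutual exclusivity of $H$ and $I$---a direct consequence of the determinism assumptions on $\mathcal{S}$---is made explicit.
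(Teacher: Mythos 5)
Your proposal is correct and matches the paper's (implicit) argument exactly: the corollary is intended as a direct Boolean consequence of Proposition~\ref{prop:realizability}, the analogous proposition for $\Omega'$, and the trichotomy (halting / infinite non-halting / blocking) that the paper establishes for the unique computation of $\mathcal{S}$, which gives the mutual exclusivity you invoke. Nothing further is needed.
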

It follows that if $\RS^\star_d(\MITL)$ is decidable, we can decide
whether $\mathcal{S}$ has a halting computation. But the latter is
known to be undecidable.  Hence:
\begin{theorem}\label{thm:realizability}
$\RS^\star_d(\MITL)$ is undecidable.
\end{theorem}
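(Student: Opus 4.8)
The plan is to read the theorem off directly from Corollary~\ref{cor:halting}, which has already done the real work. First I would observe that the reduction $\mathcal{S} \mapsto (\Omega, \Omega')$ is effective: given a DCM $\mathcal{S} = \langle S, s_0, s_\textit{halt}, M, \Delta\rangle$, the alphabets $\Sigma_E, \Sigma_C$, the conjuncts $\phi_1,\dots,\phi_7$ and $\psi_1,\dots,\psi_7$, and hence $\Omega = \Phi \Rightarrow \Psi$ together with $\Omega' = \Phi \Rightarrow \Psi \wedge \Globally(\neg s_\textit{halt})$, are all plainly computable from the finite data of $\mathcal{S}$. Moreover each is a genuine \MITL formula (every temporal operator carries a non-singular interval, the only timing bounds being $<1$, $\leq 1$, $\geq 1$, $>1$), and the plant may be taken to be the trivial one accepting $T\Sigma^\star$. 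So the map is a computable function from DCMs to pairs of $\RS^\star_d(\MITL)$ instances.

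Next I would argue by contradiction: suppose $\RS^\star_d(\MITL)$ is decidable. Then realisability of $\Omega$ and realisability of $\Omega'$ are each decidable, so the Boolean predicate ``$\Omega$ is finite-word realisable \emph{and} $\Omega'$ is not finite-word realisable'' is decidable. By Corollary~\ref{cor:halting} this predicate holds precisely when $\mathcal{S}$ has a halting computation. Composing with the computable reduction above, we would obtain a decision procedure for the halting problem of DCMs, contradicting its undecidability~\cite{Brand83}. The theorem follows.

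It is worth stressing that this closing step is entirely routine; the substance of the argument lives in Proposition~\ref{prop:realizability}, its $\Omega'$ counterpart, and the design of the formulas. The genuine obstacle---already surmounted in the construction---is that \textsf{C3} and \textsf{C4} are \emph{punctual} timing requirements (a read must occur \emph{exactly} one time unit after the matching write), which \MITL cannot state directly. The device is to offload punctuality onto the game interaction: the environment may challenge a suspected violation with $\checkl$ or $\checkr$, and $\psi_7$ forces the controller to have played faithfully whenever such a challenge occurs, so that punctuality is enforced indirectly through $\psi^\leftarrow$ and $\psi^\rightarrow$ using only non-singular bounds. When verifying the details I would pay closest attention to the ``almost identical'' proof of the $\Omega'$ proposition and to the blocking case of Proposition~\ref{prop:realizability}, to confirm that the disjunction between a halting computation and an infinite non-halting one is cleanly separated by the extra conjunct $\Globally(\neg s_\textit{halt})$, since that separation is exactly what makes the two-formula criterion of Corollary~\ref{cor:halting} capture halting rather than mere non-blocking.
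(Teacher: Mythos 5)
Your proof is correct and takes essentially the same approach as the paper: the paper likewise derives Theorem~\ref{thm:realizability} immediately from Corollary~\ref{cor:halting}, observing that a decision procedure for $\RS^\star_d(\MITL)$ would decide the two realisability queries for $\Omega$ and $\Omega'$ and hence the DCM halting problem, which is undecidable. Your additional remarks on the effectiveness of the reduction and on the non-singularity of the intervals are accurate but routine, which is why the paper leaves them implicit.
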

Theorem~\ref{thm:realizability} and its proof are the core results
from which we will derive all other undecidability results announced
at the beginning of the section.
\begin{remark}\label{rem:smtl}
  One may show that the $\RS^\omega_d$ problem is undecidable for
  formulas of the form $\Phi \Rightarrow \Psi$ where $\Phi$ and $\Psi$
  are conjunctions of formulas in $\SMTL[\Until[0,\infty]]$ by
  rewriting $\phi_i$'s and $\psi_i$'s (see
  Appendix~\ref{app:smtlrealisability}). This answers an open question
  of~\cite{BulDav14}. 
\end{remark}

\parag{\BPRS and \BCRS for \SMTL, \CFMTL, and \MITL.}
In the proof of Proposition~\ref{prop:realizability}, if $\mathcal{S}$
actually halts, the number of messages present in the channel during
the (unique) computation is bounded by a number $N$.  It follows that
the strategy of $C$ can be implemented as a bounded-precision
controller (with precision $(m, K) = (1, 1)$ and $N$ clocks) or a
bounded-clocks controller (with precision $(m, K) = (\frac{1}{N}, 1)$
and a single clock).  
Corollary~\ref{cor:halting} therefore holds also for the
bounded-precision and bounded-clocks cases, and the
$\BPRS^\star_d(\MITL)$ and $\BCRS^\star_d(\MITL)$ problems are
undecidable. By further modifying the formulas used in the proof of
Proposition~\ref{prop:realizability}, we show that the undecidability
indeed holds even when we allow only unary non-strict modalities with
lower-bound constraints and require the constraints to be exclusively
strict or non-strict (see Appendix~\ref{app:openclosedmitl}), hence
$\BPRS^\star_d$ and $\BCRS^\star_d$ are undecidable too on
$\textsf{Open-}\MITL^{\mathrm{ns}}[\Finally[\infty]]$ and
$\textsf{Closed-}\MITL^{\mathrm{ns}}[\Finally[\infty]]$. This entails
undecidability in the \emph{undesired specifications} case because the
negation of an $\textsf{Open-}\MITL^{\mathrm{ns}}[\Finally[\infty]]$
is a $\textsf{Closed-}\MITL^{\mathrm{ns}}[\Finally[\infty]]$ formula
and vice-versa.  Finally, we can extend our proofs to the infinite
words case (see Appendix~\ref{app:openclosedmitl}), hence:
\begin{theorem}\label{thm:openclosedmitl}
  $\RS^b_s(\mathsf{L})$, $\IRS^b_s(\mathsf{L})$,
  $\BPRS^b_s(\mathsf{L})$ and $\BCRS^b_s(\mathsf{L})$ are
  undecidable for 
  $\raisebox{-0.8pt}{$\mathsf{L}$}\in\{\mathsf{Open-}\MITL^{\mathrm{ns}}[\Finally[\infty]],
  \mathsf{Closed-}\MITL^{\mathrm{ns}}[\Finally[\infty]]\}$,
  $s\in\{u,d\}$ and $b\in \{\star,\omega\}$.
\end{theorem}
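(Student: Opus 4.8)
The plan is to reuse the reduction of Proposition~\ref{prop:realizability} and Corollary~\ref{cor:halting}, which already encodes DCM halting into the pair $(\Omega,\Omega')$, and to lift it along four orthogonal dimensions: down to the bounded-resource problems $\BPRS$ and $\BCRS$ (whence $\IRS$ and $\RS$ follow, being less restricted), into the two weak fragments, to undesired specifications ($s=u$), and to infinite words ($b=\omega$). For the bounded-resource and hierarchy transfer I would argue as follows. When $\mathcal{S}$ halts its unique computation keeps the channel bounded by some $N$, so the faithful-encoding controller never tracks more than $N$ pending writes; this is implementable with precision $(1,1)$ and $N$ clocks for $\BPRS$, and with a single clock and precision $(\frac{1}{N},1)$ for $\BCRS$, by spacing the writes at fractional offsets $\frac{i}{N}$ so that the lone clock's fractional value selects the due read. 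Such a controller is in particular an STS and a general strategy, so it realises $\Omega$ at every level of the hierarchy, while $\Omega'$ is not $\RS$-realisable when $\mathcal{S}$ halts, hence not realisable at any level. In each non-halting case the right-hand side of Corollary~\ref{cor:halting} stays false uniformly: if $\mathcal{S}$ blocks then $\Omega$ is not $\RS$-realisable; if the infinite computation keeps the channel bounded then $\Omega'$ is realised by a finite STS, hence at every bounded level; and if that computation has unbounded channel then no STS, having finitely many clocks of a fixed precision, can track all pending writes, so $\Omega$ itself fails at the $\IRS$, $\BPRS$ and $\BCRS$ levels. Thus $\mathcal{S}$ halts iff $\Omega$ is realisable and $\Omega'$ is not, simultaneously for all four problems.

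The heart of the proof is rewriting every $\phi_i,\psi_i$ into $\MITL^{\mathrm{ns}}[\Finally[\infty]]$, whose only modalities are the non-strict $\wFinally[\sim a]$ and $\wGlobally[\sim a]$ over unbounded, lower-bounded intervals ($\sim\,\in\{\geq,>\}$); two features of the current formulas are forbidden there, namely the $\Next$ operators and the upper-bounded intervals such as $\wFinally[\leq 1]$ and $\wFinally[<1]$. I would eliminate the upper bounds by turning each ``within $1$ t.u.'' obligation of the controller into a lower-bounded accusation of the environment: instead of requiring the controller to certify that the matching read occurs within $1$ t.u., I let the environment play a check accusing it of being late---of a matching event strictly after $1$ t.u.---and make the controller refute that accusation, which speaks only of the lower-bounded $\wFinally[>1]$ and its dual $\wGlobally[\geq 1]$. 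The $\Next$ operators I would discard by tightening the timing discipline so that each computation step occupies a full, unit-synchronised slot; event-adjacency is then witnessed purely by the rigid $E/C$ alternation together with the coarse separators $\wFinally[>0]$ (strictly later) and $\wFinally[\geq 1]$ (at least one t.u.\ later), with extra marker and check actions inserted wherever a residual local ``next'' test would otherwise be needed, so that it too becomes a lower-bounded timing assertion the environment can police.

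The strict/non-strict split then falls out of running this construction twice. Taking every threshold strict keeps the formula in $\mathsf{Open-}\MITL^{\mathrm{ns}}[\Finally[\infty]]$ (using $\wFinally[>a]$ with the dual $\wGlobally[\geq a]$), while taking them non-strict keeps it in $\mathsf{Closed-}\MITL^{\mathrm{ns}}[\Finally[\infty]]$ (using $\wFinally[\geq a]$ and $\wGlobally[>a]$), so both fragments receive a desired-specification reduction. The case $s=u$ is then free: the negation of an $\mathsf{Open-}\MITL^{\mathrm{ns}}[\Finally[\infty]]$ formula is a $\mathsf{Closed-}\MITL^{\mathrm{ns}}[\Finally[\infty]]$ formula and conversely, so avoiding a specification in one fragment is exactly realising its negation in the other, transferring undecidability from $s=d$ across both fragments. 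For $b=\omega$ I would pad each finite halting play with an infinite $\textit{Nil}$-suffix and reinterpret the outermost $\wGlobally$'s over infinite words, so that halting, blocking and non-halting computations are still separated exactly as above. I expect the rewriting of the second step---simultaneously discarding $\Next$ and every upper bound while preserving the fragile alignment of each write with its read one time unit later---to be the main obstacle, since this is precisely the punctual expressive power that \MTL enjoyed and on which the reduction of \cite{BouBoz06} relied.
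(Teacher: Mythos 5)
Your proposal reproduces the paper's strategy for the easy lifts (boundedness of the channel giving \BPRS with precision $(1,1)$ and $N$ clocks, \BCRS with one clock and precision $(\frac{1}{N},1)$; Open/Closed duality under negation for $s=u$; padding for $b=\omega$), but it has a genuine gap exactly where you yourself locate ``the main obstacle'': the $\mathsf{Open\text{-}}\MITL^{\mathrm{ns}}[\Finally[\infty]]$ case. Your plan is to build one formula enforcing the exact 1-t.u.\ write/read matching and then instantiate all thresholds strictly (Open) or non-strictly (Closed). This cannot work for the Open fragment. Exactness \emph{is} expressible with closed constraints --- $\wFinally[\geq 1]\psi\wedge\neg\wFinally[>1]\psi$ is, in negation normal form, a Closed formula, and this is essentially what the paper's appendix does --- but it is \emph{not} expressible with open ones: in NNF, $\mathsf{Open\text{-}}\MITL^{\mathrm{ns}}[\Finally[\infty]]$ only admits $\wFinally$ over open intervals and $\wGlobally$ over closed intervals, so even your one-sided ``refute lateness'' obligation $\neg\wFinally[>1](\cdot)=\wGlobally[>1](\neg\cdot)$ already falls in the Closed fragment, not the Open one. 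More fundamentally, open formulas (like open timed automata) define $d$-open sets of timed words, and the set of plays where each read occurs \emph{exactly} 1 t.u.\ after its write is not $d$-open: no open formula can accept $(m!,0)(\textit{Nil},0)(m?,1)(\checkl,1)$ while rejecting $(m!,0)(\textit{Nil},0)(m?,1-\varepsilon)(\checkl,1-\varepsilon)$, and no amount of game interaction changes the language-theoretic fact. The paper's proof therefore changes the \emph{encoding itself} for the Open case: conditions $\textsf{C3}''$/$\textsf{C4}''$ replace punctual matching by matching inside \emph{open windows} whose endpoints are determined by neighbouring events (the read of the write at $t$ must be the first action at time $\geq t+1$ and must lie in the open interval $(t+1,t''+1)$, with $t''$ the next write/read after $t$), i.e., a drift-tolerant, order-preserving simulation of the FIFO channel, and the correctness argument is redone for this relaxed encoding. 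This additional idea is absent from your proposal.

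A secondary issue: you must also remove all $\Next$ operators (and the upper-bounded $\Next[\leq 1]$, $\Next[>0]$) from the alternation/timing formulas $\phi_1,\ldots,\phi_6,\psi_1,\ldots,\psi_6$. The paper does not attempt to express these in the weak fragment at all: it moves $\textsf{C1}'$/$\textsf{C2}'$ into a \emph{plant} $\Plant$ (guards $x>0$ and $x=0$, with $x$ reset on every edge), which is legitimate since all four problems take a plant as input, and the residual adjacency tests are expressed with until-formulas such as $m?\wUntil\checkl$ that are afterwards eliminated using the fact that the relevant events occur at most once per play. Your plant-free sketch (``unit-synchronised slots plus marker actions'') leaves both the alternation constraints and these next-step tests unaccounted for inside the fragment, so even the Closed half of your construction is not yet a proof.
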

This result extends the previous undecidability proofs of
\cite{DoyGee09} ($\RS^\omega_d(\ECL)$ is undecidable), and
of~\cite{DSoMad02} ($\IRS^\star_d(\TA)$ and $\IRS^\star_u(\TA)$ are
undecidable). In light of these previous works, our result is somewhat
surprising as the undecidability proof in~\cite{DSoMad02} is via a
reduction from the universality problem for timed automata, yet this
universality problem becomes decidable when all constraints are
strict~\cite{OuaWor03}.

Finally, it remains to handle the cases of \SMTL and \CFMTL. Contrary
to the case of \MTL, the infinite-word satisfiability problem is
decidable for \SMTL~\cite{OuaWor07} and the
infinite-word model-checking problem is decidable for both
\SMTL~\cite{OuaWor07} and \CFMTL~\cite{BouMar07}.  Nevertheless, our
synthesis problems remain undecidable for these fragments (see
Appendix~\ref{app:smtlcfmtl}).  In particular, the result on \SMTL answers an open
question of~\cite{BouBoz06} negatively:
\begin{theorem}\label{thm:smtlcfmtl}
  $\RS^b_s(\mathsf{L})$, $\IRS^b_s(\mathsf{L})$,
  $\BPRS^b_s(\mathsf{L})$ and $\BCRS^b_s(\mathsf{L})$ are
  undecidable for $\raisebox{-0.8pt}{$\mathsf{L}$}\in\{\SMTL, \CFMTL\}$, $s\in\{u,d\}$ and
  $b\in \{\star,\omega\}$.
\end{theorem}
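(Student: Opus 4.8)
The plan is to reduce from the same halting problem for DCMs, reusing the construction of Proposition~\ref{prop:realizability}, but rewriting the formulas $\phi_i$ and $\psi_i$ so that they live inside \SMTL (respectively \CFMTL). The key observation is that Theorem~\ref{thm:realizability} and its \BPRS/\BCRS strengthenings already give undecidability for the \emph{specific} formula $\Omega = \Phi \Rightarrow \Psi$; so it suffices to exhibit, for each of the two target fragments, a semantically equivalent (on the relevant plays) formula that meets the syntactic restriction. First I would inspect each conjunct of $\Phi$ and $\Psi$ and classify the intervals appearing in its until/dual-until subformulas once the formula is put in negation normal form.

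For \SMTL the constraint is that every subformula $\phi_1 \Until[I] \phi_2$ has $I$ \emph{bounded} in negation normal form. Most of our conjuncts are built from $\wFinally$, $\wGlobally$, $\Next$ and bounded-interval operators $\wFinally[\leq 1]$, $\wFinally[<1]$; the only problematic pieces are the unbounded $\wFinally$ and $\wGlobally$ used to say `somewhere in the play' or `everywhere in the play'. The plan is to replace these unbounded eventualities by \emph{bounded} ones. The trick is that in our encoding actions occur at a controlled rate: by conditions \textsf{C2} and \textsf{C6}, consecutive $E$/$C$-actions are separated by bounded delays, so an unbounded $\wFinally\,\theta$ that we need to assert can be re-expressed as a nested chain of bounded-interval modalities that walks along the play step by step, each step guarded by a bounded interval such as $[0,\infty)$ replaced by $[0,c]$. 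The main obstacle will be handling the genuinely unbounded safety requirement $\Globally(\neg s_\textit{halt})$ appearing in $\Omega'$, since a bounded-until rewriting of an unbounded \emph{globally} is not automatic; I expect to push this through by exploiting that in an accepting finite play the relevant property need only be checked up to the (bounded) horizon of the play, or by relocating the $s_\textit{halt}$ test into a per-step conjunct as sketched in Remark~\ref{rem:smtl}, where the $\SMTL[\Until[0,\infty]]$ rewriting of the $\phi_i,\psi_i$ is announced.

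For \CFMTL the constraint is milder: unbounded $\phi_1 \Until[I] \phi_2$ is allowed provided $\phi_2 \in \LTL$ (and dually for $\dualUntil$ with $\phi_1 \in \LTL$). Here I would argue that the unbounded eventualities in our formulas already have \LTL (untimed) right-hand sides — e.g.\ $\wFinally(\phi_\textit{E} \wedge \neg s_0)$, $\wFinally(s \wedge \Next a)$, and the guard parts of $\psi_5,\psi_7$ — after factoring each subformula so that all genuinely \emph{timed} constraints sit under bounded intervals while the unbounded modalities only range over purely propositional/Boolean-plus-$\Next$ combinations, which are in \LTL. The verification step is thus mostly a careful syntactic audit of every conjunct, checking the negation-normal-form polarity so that unbounded \emph{until} carries an \LTL target and unbounded \emph{dual-until} carries an \LTL source.

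Once the rewritings are fixed, the realisability argument is verbatim that of Proposition~\ref{prop:realizability} and Corollary~\ref{cor:halting}: the rewritten $\Omega$ and $\Omega'$ are realisable under exactly the same conditions on $\mathcal{S}$, so halting reduces to the pair of realisability questions, and the bounded-resource observations (precision $(1,1)$ with $N$ clocks, or $(\tfrac1N,1)$ with one clock) transfer unchanged, yielding undecidability of \BPRS and \BCRS, hence of \IRS and \RS, in both the finite- and infinite-word and both the desired- and undesired-behaviour settings. I expect the main obstacle throughout to be the \SMTL boundedness requirement on unbounded modalities, and I would devote the bulk of the appendix (Appendix~\ref{app:smtlcfmtl}) to giving the explicit bounded rewritings and proving their equivalence on well-formed plays.
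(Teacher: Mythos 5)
Your plan keeps the implication $\Omega = \Phi \Rightarrow \Psi$ and hopes to massage each conjunct into the target fragment, but this cannot work: both \SMTL and \CFMTL impose \emph{polarity-sensitive} restrictions, and the implication forces $\Phi$ into negative polarity. In negation normal form, $\Omega = \neg\Phi \vee \Psi$ contains, for instance, $\neg\phi_2 = \wFinally (\phi_\textit{E} \wedge \Next[\leq 1] \phi_\textit{E})$: an \emph{unbounded} until (forbidden in \SMTL) whose right-hand side contains a bounded timed modality and hence is not in \LTL (forbidden in \CFMTL). Switching to the undesired case does not help, since the specification then becomes $\Phi \wedge \neg\Psi$, which contains $\neg\psi_2$ of exactly the same shape; and $\psifst$ and $\psi^\leftarrow$, which must appear positively, are unbounded eventualities over bounded timed modalities, again outside both fragments. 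So no conjunct-wise rewriting of the \emph{same} encoding can land in \SMTL or \CFMTL. This is precisely why the paper's proof (Appendix~\ref{app:smtlcfmtl}) changes the encoding itself: it builds an explicit plant $\mathcal{P}$ whose guards ($x>0$, $x=0$) and transition structure enforce the alternation and timing conditions $\textsf{C1}'$--$\textsf{C2}'$, thereby eliminating $\Phi$ (and the troublesome $\phi_6/\psi_6$-style timing formulas) altogether; the specification reduces to the matching conditions $\textsf{C3}'$--$\textsf{C4}'$ plus $\wFinally \textit{Halt}$, and the remaining rewritings are not generic logical equivalences but are proved only for words of $\Lang(\mathcal{P})$ (e.g., plays ending in $\checkr$, or in which $\hat{\theta}^\rightarrow_0$ occurs at most once). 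The plant is the missing key idea, not an optional variation.

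Two further problems. First, your \SMTL device---replacing an unbounded $\wFinally$ by a ``nested chain of bounded-interval modalities that walks along the play''---is unsound: a fixed formula has fixed modal depth, while the encoded computation, hence the play, is unboundedly long, so no such formula can assert a genuinely unbounded eventuality. You also have the polarity backwards: in \SMTL it is the \emph{until} that must be bounded in negation normal form, while unbounded $\wGlobally$ is free, so $\Globally(\neg s_\textit{halt})$ was never the obstacle; the obstacles are the unbounded eventualities arising from $\neg\Phi$, $\neg\Psi$ and $\psi^\leftarrow/\psi^\rightarrow$. Relatedly, Remark~\ref{rem:smtl} does not do what you want: it concerns formulas \emph{of the form} $\Phi \Rightarrow \Psi$ whose constituent pieces lie in $\SMTL[\Until[0,\infty]]$---the implication itself is not an \SMTL formula, so it cannot serve as an \SMTL specification here. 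Second, the paper does not prove all combinations of $(s,b)$ by one uniform argument: it reduces directly to $\BPRS_u^b(\SMTL)$, $\BPRS_d^b(\CFMTL)$, $\BPRS_u^\star(\CFMTL)$ and the corresponding \BCRS problems, and quotes the remaining cases (notably desired-behaviour \SMTL) from~\cite{BouBoz06,BouMar07}; your claim that everything ``transfers unchanged'' from Proposition~\ref{prop:realizability} glosses over exactly the cases where the polarity analysis forces a different treatment.
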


\section{Bounded-resources synthesis for \MITL
  properties}\label{sec:BoundReacSynt} We have now characterised
rather precisely the decidability border for \MITL synthesis
problems. In light of these results, we focus now on
$\BRRS^\star_d(\MITL)$ (since \MITL is closed under complement, one
can derive an algorithm for $\BRRS^\star_u(\MITL)$ from our
solution). Recall that the algorithm of D'Souza and Madhusudan
\cite{DSoMad02}, associated with the translation of \MITL into \TA
\cite{AluFed96} yields a 3\EXP\ procedure for these two
problems. Unfortunately this procedure is unlikely to be amenable to
efficient implementation. This is due to the translation from \MITL to
\TA and the need to determinise a region automaton, which is known to
be hard in practice. On the other hand, Bouyer \textit{et al.}
\cite{BouBoz06} present a procedure for $\BRRS^\star_d(\MTL)$ (which
can thus be applied to \MITL requirements). This algorithm is
on-the-fly, in the sense that it avoids, if possible to build a full
automaton for the requirement; and thus more likely to perform well in
practice. Unfortunately, being designed for \MTL, its running time can
only be bounded above by a non-primitive recursive function.  We
present now an algorithm for $\BRRS^\star_d(\MITL)$ that combines the
advantages of these two previous solutions: it is \emph{on-the-fly}
and runs in 3\EXP. To obtain an on-the-fly algorithm, Bouyer
\textit{et al.} use \emph{one-clock alternating automata} (\OCATA)
instead of \TA to represent the \MITL requirement. We follow the same
path, but rely on the newly introduced \emph{interval-based
  semantics}~\cite{BriEst13} for these automata, in order to mitigate
the complexity. Let us now briefly recall these two basic ingredients
(due to lack of space, we only sketch the algorithm, a more complete
presentation is in Appendix~\ref{sec:3exp-algor-brrsst}).

\parag{\OCATA\ and interval semantics.} Alternating timed automata
\cite{OuaWor07} extend (non-deterministic) timed automata by adding
\emph{conjunctive transitions}. Intuitively, conjunctive transitions
spawn several copies of the automaton that run in parallel from the
target states of the transition. A word is accepted iff \emph{all}
copies accept it. An example is shown in \figurename~\ref{fig:OCATA},
where the conjunctive transition is the hyperedge starting from
$\ell_0$. In the classical semantics, an execution of an \OCATA\ is a
sequence of set of states, named \emph{configurations}, describing the
current location and clock valuation of all active copies. For
example, a prefix of execution of the automaton in
\figurename~\ref{fig:OCATA} would start in $\{(\ell_0,0)\}$
(initially, there is only one copy in $\ell_0$ with the clock equal to
$0$); then $\{(\ell_0,0.42)\}$ (after letting $0.42$ time units
elapse); then $\{(\ell_0,0.42),(\ell_1,0)\}$ (after firing the
conjunctive transition from $\ell_0$), etc. It is well-known that all
formulas $\phi$ of \MTL (hence, also \MITL) can be translated into an
\OCATA\ $A_\phi$ that accepts the same language \cite{OuaWor07} (with
the classical semantics); and with a number of locations linear in the
number of subformulas of $\phi$. This translation is thus
straightforward. This is the key advantage of \OCATA\ over \TA: the
complexity of the \MITL formula is shifted from the syntax to the
semantics---what we need for an on-the-fly algorithm.
\begin{figure}[tbp]
  \centering
  \scalebox{.8}{\begin{tikzpicture}[node distance = 2.5cm]
    \node[initial left,accepting,state](0){$\ell_0$};
    \node[circle, minimum size = 0.5em, inner sep = 0mm, fill, right of=0](1){};
    \node[state, right of=1](2){$\ell_1$};
    
    \path (0) edge[loopabove,->] node[above]{$b$} (0)
    (0) edge node[above]{$a$} (1)
    (1) edge[->,bend left] node[below] {} (0)
    (1) edge[->] node[above] {$y:=0$} (2)
    (2) edge[->,loopabove] node[above]{$a$} (2)
    (2) edge[->] node[above]{$y\leq 1,b$} (7,0);
  \end{tikzpicture}}
  \caption{An \OCATA\ (with single clock $y$) accepting the language of
    $\Globally(a\Rightarrow \Finally[\leq 1]b)$.}
  \label{fig:OCATA}
\end{figure}

Then; in the \emph{interval semantics} \cite{BriEst13}, valuations of
the clocks are not \emph{points} anymore but
\emph{intervals}. Intuitively, intervals are meant to approximate sets
of (punctual) valuations: $(\ell,[a,b])$ means that there \emph{are}
clock copies with valuations $a$ and $b$ in $\ell$, and that there
\emph{could be} more copies in $\ell$ with valuations in $[a,b]$. In
this semantics, we can also \emph{merge} two copies
$(\ell, [a_1,b_1])$ and $(\ell, [a_2,b_2])$ into a single copy
$(\ell, [a_1,b_2])$ (assuming $a_1\leq b_2$), in order to keep the
number of clock copies below a fixed threshold $K$. It has been shown
\cite{BriEst13} that, when the \OCATA has been built from an \MITL
formula, the interval semantics is sufficient to retain the language
of the formula, with a number of copies which is at most doubly
exponential in the size of the formula.

\parag{Sketch of the algorithm.} Equipped with these elements, we can
now sketch our algorithm for $\BRRS_d^\star(\MITL)$. Starting from an
\MITL formula $\phi$, a plant $\Plant$ and a granularity
$\mu = (X,m,K)$, we first build, in polynomial time, an \OCATA
$A_{\neg\phi}$ accepting $\Lang(\neg\phi)$. Then, we essentially adapt
the technique of Bouyer \textit{et al.} \cite{BouBoz06}, relying on
the interval semantics of \OCATA instead of the classical one. This
boils down to building a tree that unfolds the parallel execution of
$A_{\neg\phi}$ (in the interval semantics), $\Plant$ and all possible
actions of a $\mu$-granular controller (hence the \emph{on-the-fly}
algorithm). Since the granularity is fixed, there are only finitely
many possible actions (i.e., guards and resets on the controller
clocks) for the controller at each step. We rely on the region
construction to group the infinitely many possible valuations of the
clocks into finitely many equivalence classes that are represented
using `region words' \cite{OuaWor07}. The result is a finitely
branching tree that might still have infinite branches. We stop
developing a branch once a global configuration (of $A_{\neg\phi}$,
$\Plant$, and the controller) repeats on the branch. By the region
construction \emph{and} the interval semantics, this will happen on
all branches, and we obtain a \emph{finite tree} of size at most
triply exponential. This tree can be analysed (using backward
induction) as a game with a safety objective for the controller: to
avoid the nodes where $\Plant$ and $A_{\neg\phi}$ accept at the same
time. The winning strategy yields, if it exists, a correct controller.

\begin{table}[tbp]
  \caption{Experimental results on the scheduling problem: realisable
    instances on the left, non-realisable on the right.}\label{tab:scheduler}
\centering{\scriptsize
 \begin{tabular}{|c|c|c|c|}
   \hline
   $T$ & $n$ & \# clocks & exec. time (sec) / \#nodes \\
  \hline
   1 & 1 & 0 & 46 / 52  \\
   1& 1& 1  & 199 / 147  \\
   1& 1& 2 & 4,599 / 1,343 \\
   
   2 &2 & 1 & 2,632 / 645 \\
   2&2 & 2 & 18,453 / 2,358  \\

   3 &3 & 1 & 182,524 / 2,297 \\
   3 &3 & 2 & $>$5min  \\
  
   4 & 4 & 0 & 54,893 / 667 \\
   4&4 & 1 & $>$5min \\
   \hline
   \end{tabular}
   \hspace*{.5cm}
   \begin{tabular}{|c|c|c|c|}
     \hline
     $T$ & $n$ & \# clocks & exec. time (sec) / \#nodes \\
     \hline
     2 & 1 & 0 & 77 / 84  \\
     2&1 & 1 & 824 / 311\\
     2&1 & 2 & 3,079 / 1,116  \\
     
     3 &2 & 1 & 17,134 / 1698 \\
     3 &2 & 2 & $>$5min \\
     
     4 & 3 & 0 & 10,621 / 540  \\
     4&3 & 1 & $>$5min  \\
     \hline
   \end{tabular}
 }
\end{table}

\parag{Experimental results.} We have implemented our procedure in
Java, and tested it over a benchmark related to a scheduling problem,
inspired by an example of~\cite{BulDav14}. This problem considers $n$
machines, and a list of jobs that must be assigned to the machines. A
job takes $T$ time units to finish. The plant ensures that at least
one time unit elapses between two job arrivals (which are
uncontrollable actions). The specification asks that the assignment be
performed in 1 time unit, and that each job has $T$ time units of
computation time. We tested this example with $T=n$, in which case the
specification is realisable (no matter the number of clocks, which we
make vary for testing the prototype efficiency), and with $T=n+1$, in
which case it is not.  Table~\ref{tab:scheduler} summarises some of
our results.

These results show that our prototypes can handle small but
non-trivial examples.  Unfortunately---as expected by the high
complexities of the algorithm---they do not scale well. As future
works, we will rely on the well-quasi orderings defined in
\cite{BouBoz06} to introduce heuristics in the spirit of the antichain
techniques \cite{FilJin09}. Second, we will investigate zone-based
versions of this algorithm to avoid the state explosion which is
inherent to region based techniques.


\newpage 
\appendix
\changepage{3cm}{3cm}{-1.5cm}{-1.5cm}{}{-1cm}{}{}{}

\section{Proof of Proposition~\ref{prop:realizability}}\label{app:realizability}

If ($i$) or ($ii$) is true, then $\Omega$ can be realized by the following strategy:

\begin{enumerate}
\item In round $0$, $C$ proposes $(\delta^0_\textit{C}, \textit{Nil}_\textit{C})$ with any $\delta^0_\textit{C} \in \mathbb{R}_{\geq 0}$. 
If $(\sigma_1, \tau_1) = (\textit{Nil}_\textit{C}, \delta^0_\textit{C})$ then
$C$ proposes $(\delta^1_\textit{C}, \textit{Nil}_\textit{C})$ with $\delta^1_\textit{C} > 1$ in round $1$;
if $(\sigma_2, \tau_2) = (\textit{Nil}_\textit{C}, \tau_1 + \delta^1_\textit{C})$ then
$C$ proposes $(\delta^2_\textit{C}, \textit{Nil}_\textit{C})$ with $\delta^2_\textit{C} > 1$
in round $2$, and so on.
If $E$ never supersedes then the play satisfies $\Omega$ (as it never violates any of $\psi$'s).
If at any point $E$ supersedes with an action other than $s_0$, the play will again satisfy $\Omega$ (by violating $\varphi_1$). 
If $E$ supersedes with $s_0$, since $\Psi$ is not violated, $C$ can proceed to step 2.
\item In round $i$ ($i > 0$) with $\sigma_{i} = s$ for some
$s \in S \setminus \{s_\textit{halt}\}$, $C$ proposes $(\delta^i_\textit{C}, a)$ with
$(s, a, s') \in \Delta$ for some $s'$ (this corresponds to a transition in the computation).
If the channel is not empty before round $i$, let $(m'!, \tau_j)$ ($j < i$) 
be the event that corresponds to the oldest pending message, i.e., the message at the head of the channel.
If $a = m!$ for some $m$, it must happen before the oldest pending message is read,
i.e., $\tau_{i} + \delta^i_\textit{C} < \tau_j + 1$;
if the channel is empty then let $\delta^i_\textit{C} \leq 1$.
If $a = m?$ for some $m$, it must be reading the oldest pending message, i.e., $\tau_{i} + \delta^i_\textit{C} = \tau_j + 1$.
Since we have $\delta^i_\textit{C} \leq 1$ in all these cases,
$\varphi_2$ will be violated if $E$ supersedes. If $E$ does not supersede, proceed to step 3.
\item In round $i$ ($i > 1$) with $\sigma_{i-1} = s$ and $\sigma_{i} = a$ for some $(s, a, s') \in \Delta$,
$C$ proposes $(\delta^i_\textit{C}, \textit{Win})$ with $\delta^i_\textit{C} > 1$.
If $E$ does not supersede, $\varphi_3$ will be violated.
If $E$ supersedes with some positive delay, $\varphi_6$ will be violated. 
If $E$ proposes $(0, b)$ with $b \notin \{s', \checkl, \checkr\}$, $\varphi_4$ will be violated.
Now consider the remaining cases:
	\begin{itemize}
	\item If $(\sigma_{i+1}, \tau_{i+1}) = (s', \tau_{i})$ and $s' \neq s_\textit{halt}$,  go back to step $2$.
	\item If $(\sigma_{i+1}, \tau_{i+1}) = (s', \tau_{i})$ and $s' = s_\textit{halt}$, proceed to step $4$.
	\item If $(\sigma_{i+1}, \tau_{i+1}) = (\checkl, \tau_{i})$, then $\varphi_7$ will be violated
			if $a = m!$ for some $m$. Otherwise if $a = m?$ for some $m$, since $\tau_{i}$ is exactly
			1 t.u. after the corresponding $m!$ (by step $2$), all $\psi$'s (in particular $\psi_7$) hold
			and $C$ may proceed to step $4$. 
	\item If $(\sigma_{i+1}, \tau_{i+1}) = (\checkr, \tau_{i})$, then if $a = m!$ for some $m$,
			either (1) $\tau_{i} < \tau_j + 1$ where $(m'!, \tau_j)$ corresponds
			to the oldest pending message (by step $2$) or (2) all `write' actions before this $m!$
			have been followed 1 t.u. later by a corresponding `read'.
			In both cases $\psifst$ does not hold, hence $\psi_7$ holds. If $a = m?$ for some $m$,
			then $\psi^\rightarrow$ and hence $\psi_7$ clearly holds (also by step $2$).
	\end{itemize}
\item Starting from round $i$ ($i > 0$) with $\sigma_{i} \in \{s_\textit{halt}, \textit{Check}^\leftarrow, \checkr\}$, $C$ proposes $(\delta^i_\textit{C}, \textit{Nil}_\textit{C})$ with
$\delta^i_\textit{C} > 1$ in the remaining rounds. If $E$ supersedes with an action other than
$\textit{Nil}_\textit{E}$, $\varphi_5$ will be violated. 
\end{enumerate}
Note that if $\mathcal{S}$ has an infinite computation not reaching $s_\textit{halt}$ and $E$ never proposes
$\checkl$ or $\checkr$, a play can grow unboundedly and $C$ will never reach step $4$; but any such play
will satisfy all $\psi$'s and hence $\Omega$. 
For the other direction, we show that if $\mathcal{S}$ is blocking at some point then $\Omega$ is not realizable,
i.e., $E$ can force a play that violates $\Omega$ for any strategy of $C$.
\begin{enumerate}
\item In round $0$, $E$ proposes $(0, s_0)$. By the rules of the timed game, it is clear that no matter what $C$ proposes,
there will be a play starting with $(s_0, 0)$, and $E$ can proceed to step $2$.
\item In round $i$ ($i > 0$) with $\sigma_{i} = s$ for some $s \in S \setminus \{ s_\textit{halt} \}$, 
$E$ proposes $(1.1, \textit{Lose})$. If $C$ does not supersede then $\psi_3$ will be violated.
If $C$ supersedes with a delay of $0$, $\psi_6$ will be violated.
If $C$ supersedes with an action other than the available transitions at $s$, $\psi_4$ will be violated.
It remains to consider the case $(\sigma_{i+1}, \tau_{i+1}) = (a, \tau_{i} + \delta^i_\textit{C})$
for some $(s, a, s') \in \Delta$ and $\delta^i_\textit{C} > 0$. Consider the following cases:
	\begin{itemize}
	\item If $a = m!$ for some $m$ and either (1) $\tau_{i+1} < \tau_j + 1$ where $(m'!, \tau_j)$ corresponds
			to the oldest pending message or (2) all `write' actions before this $m!$ have been followed 1 t.u. later
			by a corresponding `read',
			 proceed to step $3$; otherwise proceed to step $4$.
	\item If $a = m?$ for some $m$ and (1) $\tau_{i} + \delta^i_\textit{C} = \tau_j + 1$ for
	some $(m!, \tau_j)$ in the play and (2) $(m!, \tau_j)$ is the oldest pending message,
	proceed to step $3$; otherwise if $(m!, \tau_j)$ is not the oldest pending message, proceed to step $4$.
	If there is no such $(m!, \tau_j)$ in the play, proceed to step $5$.
	\end{itemize}
\item In round $i$ ($i > 0$) with $\sigma_{i} = a \in \{m!, m? \mid m \in M\}$, $E$ proposes
$(0, s')$ where $(s, a, s') \in \Delta$. There will be a play with $(\sigma_{i+1}, \tau_{i+1}) = (s', \tau_{i})$
and $E$ can go back to step $2$.
\item In round $i$ ($i > 0$) with $\sigma_{i} = a$ and either (1) $a = m!$ and $\tau_{i} \geq \tau_{j} + 1$
where $\tau_j$ is the timestamp of the oldest pending message or (2) $a = m?$ and $(m!, \tau_{k})$ with
$\tau_{i} = \tau_k + 1$ is in the play but is not the oldest pending message, $E$ proposes
$(0, \checkr)$; there will be a play with $(\sigma_{i+1}, \tau_{i+1}) = (\checkr, \tau_{i})$ (violating $\psi_7$).
\item In round $i$ ($i > 0$) with $\sigma_{i} = m?$ and there is no $(m!, \tau_j)$ in the play
with $\tau_{i} = \tau_j + 1$, $E$ proposes
$(0, \textit{Check}^\leftarrow)$; there will be a play with $(\sigma_{i+1}, \tau_{i+1}) = (\textit{Check}^\leftarrow, \tau_{i})$ (violating $\psi_7$).
\end{enumerate}
Either $E$ wins at step $2$ (if $C$ does not supersede) or $E$ will eventually proceed to step $4$ or $5$ and wins.

\section{Proof of Remark~\ref{rem:smtl}}\label{app:smtlrealisability}
It is clear that $\{\phi_i \mid 1 \leq i \leq 6 \}$ and  $\{ \psi_i \mid 1 \leq i \leq 5 \}$ are already 
$\SMTL[\Until[0,\infty]]$ formulas.
For $\psi_6$, we can replace it by $\neg \wFinally (\phi_S \wedge \neg s_\textit{halt} \wedge \Next[\leq 0] \phi_\textit{WR})$. For $\phi_7$ and $\psi_7$, we can replace $\psi^\leftarrow$ by
\begin{IEEEeqnarray*}{l}
\neg \wFinally[<1] \checkl \wedge \neg \wFinally \big(\wFinally[>1] \checkl \wedge \Next (\wFinally[<1] \checkl) \big) \\
\wedge \bigwedge_{\substack{m, m' \in M \\ m \neq m'}} \neg \wFinally \big((m! \vee m?) \wedge
\wFinally[\leq 1] (m'? \wedge \Next \checkl) \wedge \wFinally[\geq 1] (m'? \wedge \Next \checkl) \big)
\end{IEEEeqnarray*}
and replace $\psi^\leftarrow[\checkr/\checkl]$ accordingly.

\section{Proof of Theorem~\ref{thm:smtlcfmtl}}\label{app:smtlcfmtl}

We give reductions from the halting problem for DCMs to $\BPRS_u^b(\SMTL)$ and $\BPRS_d^b(\CFMTL)$
($b \in \{\star, \omega\}$), $\BPRS_u^\star(\CFMTL)$, and the corresponding \BCRS problems.
The remaining cases follow (more or less) directly from existing results~\cite{BouBoz06,BouMar07}.
The encoding we use here is very similar to the one used in Section~\ref{sec:undecidability}---the
main difference is that we now use a plant, in place of formulas, to ensure $\textsf{C1}'$ and $\textsf{C2}'$ (see below).
Without loss of generality, we consider a DCM $\mathcal{S} = \langle S, s_0, s_\textit{halt}, M, \Delta \rangle$
where $s_0 \neq s_\textit{halt}$, $s_0$ has an outgoing `write' action, and $s_\textit{halt}$ is the only state in $S$ with no outgoing transition. The plant $\mathcal{P} = \langle Q, q_0, \rightarrow, F \rangle$ 
over $(\Sigma_\textit{C} \cup \Sigma_\textit{E}, X)$ is constructed as follows:
\begin{itemize}
\item $\Sigma_\textit{C} = \{ m!, m? \mid m \in M \}$,  $\Sigma_\textit{E} = \{\textit{Check}^\leftarrow, \textit{Check}^\rightarrow, \textit{Nil}, \textit{Halt}, \textit{End} \}$, $X = \{x\}$;
\item $Q = S \cup \{q_\delta \mid \delta \in \Delta\}$, $q_0 = s_0$, $F = \{s_\textit{halt}\}$, and
	\begin{itemize}
	\item $s \xrightarrow{a, x > 0} q_\delta$ and $q_\delta \xrightarrow{\textit{Nil}, x = 0} s'$ for all $\delta = (s, a, s') \in \Delta$ with $s' \neq s_\textit{halt}$
	\item $s \xrightarrow{a, x > 0} q_\delta$ and $q_\delta \xrightarrow{\textit{Halt}, x = 0} s_\textit{halt}$ for all $\delta = (s, a, s_\textit{halt}) \in \Delta$
	\item $q_\delta \xrightarrow{\textit{Check}^\leftarrow, x = 0} s_\textit{halt}$ for all $\delta = (s, m?, s') \in \Delta$
	\item $q_\delta \xrightarrow{\textit{Check}^\rightarrow, x = 0} s_\textit{halt}$ for all $\delta \in \Delta$
	\item $s_\textit{halt} \xrightarrow{\textit{End}, \mathbf{true}} s_\textit{halt}$
	\end{itemize}
where $x$ is reset on every transition.
\end{itemize}
In what follows, let 
$\hat{\theta}^\rightarrow_0 = \varphi_\textit{WR} \wedge (\varphi_\textit{WR} \wUntil \checkr)$,
$\hat{\theta}^\rightarrow_1 = \varphi_\textit{WR} \wedge \Big(\varphi_\textit{WR} \wUntil \big(\textit{Nil} \wedge (\textit{Nil} \wUntil \hat{\theta}^\rightarrow_0) \big)\Big)$ and
$\hat{\theta}^\leftarrow_0 = \hat{\theta}^\rightarrow_0[\checkl/\checkr]$.
The encoding we have in mind is as follows:
\begin{itemize}
\item[$\textsf{C1}'$] The action sequence of the play (i.e. omitting all timestamps) is of the form
$a_0 \textit{Nil}\, a_1 \textit{Nil}\, \ldots$ where $a_0a_1\ldots$ is a trace of $G(\mathcal{S})$.
\item[$\textsf{C2}'$] Each $\textit{Nil}$ comes with no delay and no two `write' or `read' actions occur at the same time,
i.e., if $(a_i, \tau)(\textit{Nil}, \tau')(a_{i+1}, \tau'')$ is a substring of the play then $\tau = \tau'$ and $\tau < \tau''$.
\item[$\textsf{C3}'$] Each $m?$ is preceded exactly 1 t.u. earlier by a corresponding $m!$.
\item[$\textsf{C4}'$] Each $m!$ is followed exactly 1 t.u. later by a corresponding $m?$ if there are actions
that occur at least 1 t.u. after it.
\end{itemize}
It is clear that $\textsf{C1}'$ and $\textsf{C2}'$ are enforced by the plant $\mathcal{P}$.
Now let
\[
\Psi_0' =  \psi^\leftarrow \vee (\wFinally \checkr \wedge \neg \hat{\psi}^\rightarrow_\textit{fst}) \vee \hat{\psi}^\rightarrow 
\]
and $\Psi_0 = \Psi_0' \vee \wFinally \textit{Halt}$, where
$\hat{\psi}^\rightarrow_\textit{fst}$ and $\hat{\psi}^\rightarrow$ are
obtained from the corresponding formulas in
Section~\ref{sec:undecidability} by replacing $\theta_0^\rightarrow$
and $\theta_1^\rightarrow$ with their `hatted' counterparts.  In
essentially the same way as before, $\textsf{C3}'$ and $\textsf{C4}'$
are ensured by these formulas, and one can prove that $\mathcal{S}$
has a halting computation if and only if there is a bounded-precision
or bounded-clocks controller for $\Psi_0$ and $\mathcal{P}$ but no
such controller for $\Psi'_0$ and $\mathcal{P}$.  This claim holds in
both the finite- and infinite-word cases, thanks to the self-loop
$s_\textit{halt} \xrightarrow{\textit{End}, \mathbf{true}}
s_\textit{halt}$ in $\mathcal{P}$.

\parag{\SMTL.}
To show that $\BPRS^b_u(\SMTL)$ and $\BCRS^b_u(\SMTL)$ where $b \in \{\star, \omega\}$ are undecidable,
we rewrite formulas $\Psi_0$ and $\Psi_0'$ so that their negations are \SMTL formulas.
\begin{proposition}
For each $\rho \in \Lang(\mathcal{P})$ that ends with $\checkr$, we have
\begin{IEEEeqnarray*}{l}
\rho \models \neg \wFinally ( \varphi_\textit{W} \wedge \wFinally[<1] \hat{\theta}_1^\rightarrow
\wedge \wFinally[\geq 1] \hat{\theta}_0^\rightarrow) \iff \\
\rho \models \wFinally[<1] \hat{\theta}_0^\rightarrow \vee \wFinally \Big(
\wFinally[\geq 1] \hat{\theta}_1^\rightarrow \wedge \big( (\neg \varphi_\textit{W})
\Until (\wFinally[<1] \hat{\theta}_0^\rightarrow) \big)
\Big) 
\end{IEEEeqnarray*}
and
\begin{IEEEeqnarray*}{l}
\rho \models \neg \wFinally ( \varphi_\textit{W} \wedge \wFinally[<1] \hat{\theta}_1^\rightarrow
\wedge \wFinally[> 1] \hat{\theta}_0^\rightarrow) \iff \\
\rho \models \wFinally[\leq 1] \hat{\theta}_0^\rightarrow \vee \wFinally \Big(
\wFinally[\geq 1] \hat{\theta}_1^\rightarrow \wedge  \big( (\neg \varphi_\textit{W}) \Until (\wFinally[\leq 1] \hat{\theta}_0^\rightarrow) \big)
\Big) \,.
\end{IEEEeqnarray*}
\end{proposition}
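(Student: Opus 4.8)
The plan is to evaluate both sides pointwise at the first position of $\rho$ and to collapse every metric modality to a comparison between timestamps, using crucially that $\rho\in\Lang(\Plant)$ ends with $\checkr$. First I would exploit the shape of the accepting runs of $\Plant$: an accepted action sequence is of the form $a_0\,\textit{Nil}\,a_1\,\textit{Nil}\cdots$ with each $a_k\in\{m!,m?\}$, and the terminal $\checkr$ can only be fired from a state $q_\delta$, hence is immediately preceded by exactly one write-or-read action. Consequently $\hat\theta_0^\rightarrow=\phi_\textit{WR}\wedge(\phi_\textit{WR}\wUntil\checkr)$ holds at a single position $p$ (the $\phi_\textit{WR}$ action just before the final $\checkr$), because any two consecutive $\phi_\textit{WR}$ actions are separated by a $\textit{Nil}$; likewise $\hat\theta_1^\rightarrow$ holds only at position $p-2$, provided it is a $\phi_\textit{WR}$ action (and nowhere otherwise). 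Writing $\tau_p$ and $\tau_{p-2}$ for these two timestamps, the four subformulas reduce to: $\wFinally[<1]\hat\theta_0^\rightarrow$ at $i$ iff $i\le p$ and $\tau_p-\tau_i<1$; $\wFinally[\geq 1]\hat\theta_0^\rightarrow$ at $i$ iff $\tau_p-\tau_i\geq 1$; and symmetrically for $\hat\theta_1^\rightarrow$ with $\tau_{p-2}$.

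With these reductions, I would rewrite each side as a first-order statement about the write positions of $\rho$. The left-hand side $\neg\wFinally(\phi_\textit{W}\wedge\wFinally[<1]\hat\theta_1^\rightarrow\wedge\wFinally[\geq 1]\hat\theta_0^\rightarrow)$ asserts that \emph{no} write $i\le p-2$ satisfies both $\tau_{p-2}-1<\tau_i$ and $\tau_i\le\tau_p-1$, i.e.\ that no write has its ``one time unit later'' point lying in the half-open window $(\tau_{p-2},\tau_p]$. On the right, $\wFinally[<1]\hat\theta_0^\rightarrow$ holds exactly when $\tau_p<\tau_1+1$ (so no write can be $\geq 1$ t.u.\ before $p$), while the second disjunct $\wFinally(\wFinally[\geq 1]\hat\theta_1^\rightarrow\wedge((\neg\phi_\textit{W})\Until\wFinally[<1]\hat\theta_0^\rightarrow))$ asserts the existence of a position that is $\geq 1$ t.u.\ before $p-2$ and from which a $\wFinally[<1]\hat\theta_0^\rightarrow$-witness is reached without crossing a write. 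The role of the $\Until$ is precisely to forbid an intervening write, so that the witnessing configuration corresponds to the relevant (earliest offending) write rather than an arbitrary one.

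The core of the argument is then a short case analysis showing that the two conditions coincide, organised according to whether an offending write exists and where its ``$+1$'' point falls relative to the boundaries $\tau_{p-2}$ and $\tau_p$. The degenerate case must be treated explicitly: when $\hat\theta_1^\rightarrow$ holds nowhere, $p$ is the first write, so $\tau_p=\tau_1$ and $\wFinally[<1]\hat\theta_0^\rightarrow$ holds trivially, matching the fact that the left-hand side is then vacuously true. The second equivalence is obtained by the same reduction, replacing $\wFinally[\geq 1]$ by $\wFinally[>1]$ on the left and $\wFinally[<1]$ by $\wFinally[\leq 1]$ on the right; the sole change is that the window becomes open rather than closed at its upper boundary $\tau_p-1$, so the single borderline write with $\tau_i=\tau_p-1$ migrates from one side of the equivalence to the other. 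I expect the main obstacle to be exactly this bookkeeping: matching the half-open time windows on both sides, and verifying that the ``no write in between'' condition enforced by the $\Until$ is the precise complement of the ``$\exists$ write in the window'' condition on the left, including the boundary timestamps that distinguish the two equivalences.
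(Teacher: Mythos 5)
The paper never proves this proposition: in Appendix~\ref{app:smtlcfmtl} it is stated bare and left as a verification (``one can verify \ldots''), so there is no official proof to compare against. Judged on its own merits, most of your plan is right and is the natural way in: because $\rho\in\Lang(\Plant)$ ends with $\checkr$ and the plant forces the alternation $a_0\,\textit{Nil}\,a_1\,\textit{Nil}\cdots$, the formula $\hat{\theta}_0^\rightarrow$ holds at exactly one position $p$ (the write/read just before the final $\checkr$), $\hat{\theta}_1^\rightarrow$ holds exactly at $p-2$ (and nowhere if $p$ is the first such position), each metric modality collapses to a comparison with $\tau_p$ or $\tau_{p-2}$, and the left-hand sides do say ``no write has its $+1$ point in $(\tau_{p-2},\tau_p]$'' (respectively in the open window).

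There is, however, a genuine gap in the step you defer, namely the claim that the $\Until$-condition on the right is ``the precise complement'' of the window condition on the left: as a pure complementation statement this is false. In the direction left-to-right, the second disjunct needs a witness $i$ for the outer $\wFinally$, i.e.\ a position with $\tau_{p-2}-\tau_i\geq 1$, and the vacuous truth of the left-hand side does not provide one. Concretely, if all write/read actions of $\rho$ were \emph{reads}, with $\tau_{p-2}-\tau_1<1\leq\tau_p-\tau_1$, the left-hand side would hold vacuously (there is no write at all), while both right-hand disjuncts fail. What saves the proposition is the normalisation made at the start of the appendix and baked into $\Plant$: $s_0$ has a single outgoing transition and it is a \emph{write}, so position $1$ of every $\rho\in\Lang(\Plant)$ is a write. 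Your sketch silently uses this (you call $p$ ``the first write'' in the degenerate case) but never flags it, and it is exactly what closes the hard direction: if $\tau_p-\tau_1<1$ the first disjunct holds; otherwise position $1$ is a write with $\tau_1\leq\tau_p-1$, so the set of writes with $\tau_i\leq\tau_p-1$ is nonempty; take $i$ to be the \emph{last} such write (the left-hand side then forces $\tau_i\leq\tau_{p-2}-1$, hence $\tau_{p-2}-\tau_i\geq 1$ and $i<p-2$) and $j$ the first position with $\tau_j>\tau_p-1$; maximality of $i$ gives the no-intervening-write requirement of the $\Until$. The converse direction is the easy window containment you describe, and the same construction with the boundary write at $\tau_p-1$ handled as you indicate yields the second equivalence.
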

Let $\Phi_1$ and $\Phi_1'$ be the formulas obtained by
replacing corresponding subformulas in $\Phi_0$ and $\Phi_0'$, respectively.
One can verify that their negations are in \SMTL.

\parag{\CFMTL.} First note that the negations of $\Phi_1$ and
$\Phi_1'$ are in \CFMTL, hence $\BPRS_u^\star(\CFMTL)$ and
$\BCRS_u^\star(\CFMTL)$ are undecidable.  Then, to show that both
$\BPRS^b_d(\CFMTL)$ and $\BCRS^b_d(\CFMTL)$ where
$b \in \{\star, \omega\}$ are undecidable, we rewrite $\Phi_0$ and
$\Phi_0'$ into \CFMTL formulas.  This can be accomplished by the
equivalences in the following proposition.
\begin{proposition}
For each $\rho \in \Lang(\mathcal{P})$, we have
\begin{IEEEeqnarray*}{l}
\rho \models \bigvee_{m \in M} \wFinally \big( m! \wedge \wFinally[\leq 1] (m? \wedge \Next \checkl)
 \wedge \wFinally[\geq 1] (m? \wedge \Next \checkl) \big) \iff \\
\rho \models \wFinally[\geq 1] \checkl \wedge \neg \wFinally \big(\wFinally[>1] \hat{\theta}_0^\leftarrow \wedge \Next (\wFinally[<1] \hat{\theta}_0^\leftarrow) \big) \wedge \bigwedge_{\substack{m, m' \in M \\ m \neq m'}} \neg \wFinally \big((m! \vee m?) \wedge
\wFinally[=1] (m'? \wedge \hat{\theta}_0^\leftarrow)  \big)
\end{IEEEeqnarray*}
and
\begin{IEEEeqnarray*}{l}
\rho \models \bigvee_{m \in M} \wFinally \big( m! \wedge \wFinally[\leq 1] (m? \wedge \Next \checkr)
 \wedge \wFinally[\geq 1] (m? \wedge \Next \checkr) \big) \iff \\
\rho \models \wFinally[\geq 1] (\varphi_\textit{R} \wedge \Next \checkr) \wedge \neg \wFinally \big(\wFinally[>1] \hat{\theta}_0^\rightarrow \wedge \Next (\wFinally[<1] \hat{\theta}_0^\rightarrow) \big) \wedge \bigwedge_{\substack{m, m' \in M \\ m \neq m'}} \neg \wFinally \big((m! \vee m?) \wedge
\wFinally[=1] (m'? \wedge \hat{\theta}_0^\rightarrow)  \big) \,.
\end{IEEEeqnarray*}
\end{proposition}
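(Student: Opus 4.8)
The plan is to prove both biconditionals by a direct structural analysis of the words $\rho\in\Lang(\Plant)$, exploiting the rigidity imposed by the plant. Recall that every such $\rho$ has an action sequence of the form $a_0\,\textit{Nil}\,a_1\,\textit{Nil}\cdots$ with each $a_i\in\{m!,m?\mid m\in M\}$ occurring at a \emph{strictly} increasing timestamp and immediately followed, at the \emph{same} timestamp, by a $\textit{Nil}$ (or, at the very end, by one of $\checkl,\checkr,\textit{Halt}$ before the $\textit{End}$-loop in $s_\textit{halt}$). Two consequences will be used throughout: a $\checkl$ (respectively a $\checkr$) occurs \emph{at most once} and, when it does, sits at the timestamp of the unique read $m?$ that immediately precedes it; and, since any write/read is immediately succeeded by $\textit{Nil}$, $\checkl$, $\checkr$, or $\textit{Halt}$, the weak-until in $\hat{\theta}^\leftarrow_0=\varphi_\textit{WR}\wedge(\varphi_\textit{WR}\wUntil\checkl)$ can only be satisfied by a write/read whose \emph{immediate} successor is $\checkl$. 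Hence $\hat{\theta}^\leftarrow_0$ holds at exactly one position, the challenged read $p$ just before the unique $\checkl$, and symmetrically $\hat{\theta}^\rightarrow_0$ holds exactly at the write/read just before the unique $\checkr$.

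The first step is to turn every metric modality on the right-hand sides into a timing predicate relative to $p$. Since $\hat{\theta}^\leftarrow_0$ is satisfied \emph{only} at $p$, from any position $q$ we have $(\rho,q)\models\wFinally[\sim c]\hat{\theta}^\leftarrow_0$ iff $p$ lies in the future of $q$ with $\tau_p-\tau_q\sim c$; likewise $\wFinally[\geq 1]\checkl$ just asserts that $p$ exists with $\tau_p\geq 1$. With these translations I would read the three conjuncts of the right-hand side as: (i) the challenged read $p$ exists and there is room for a matching event one t.u.\ earlier; (ii) the formula $\neg\wFinally\big(\wFinally[>1]\hat{\theta}^\leftarrow_0\wedge\Next\wFinally[<1]\hat{\theta}^\leftarrow_0\big)$ forbids two \emph{consecutive} positions from straddling the instant $\tau_p-1$, which by strict monotonicity of the write/read timestamps is equivalent to the existence of an action occurring \emph{exactly} at $\tau_p-1$; and (iii) the conjunction $\bigwedge_{m\neq m'}\neg\wFinally\big((m!\vee m?)\wedge\wFinally[=1](m'?\wedge\hat{\theta}^\leftarrow_0)\big)$ forbids any event carrying a message \emph{different} from the one read at $p$ to occur one t.u.\ before $p$.

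The second step is the two-way comparison with the left-hand side, which asserts precisely that there is a write $m!$ exactly one t.u.\ before the unique read $m?$ that precedes $\checkl$, the overlap of $\wFinally[\leq 1]$ and $\wFinally[\geq 1]$ on the \emph{single} witness $p$ forcing the distance to be exactly $1$ and the messages to coincide. For the forward direction, such a write furnishes an event at $\tau_p-1$ (giving (i) and (ii)) of the correct message (giving (iii)). For the converse, (i) and (ii) produce an event at exactly $\tau_p-1$ and (iii) fixes its message to that of $p$; since at most one write/read occupies any timestamp, this pins down the unique event at $\tau_p-1$, which one then identifies with the required matching \emph{write}. The $\checkr$-variant is handled symmetrically, with the marker $\varphi_\textit{R}\wedge\Next\checkr$ and $\hat{\theta}^\rightarrow_0$ replacing $\checkl$ and $\hat{\theta}^\leftarrow_0$, and the roles of matching read/write exchanged.

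The main obstacle I anticipate is exactly this last identification: the right-hand side, on its face, only guarantees \emph{some} same-message event one t.u.\ before $p$, whereas the left-hand side demands the matching \emph{write} (and, dually, a matching \emph{read} in the $\checkr$-case). Discharging it cleanly requires combining the single-write-per-timestamp property with the transition structure built into $\Plant$ (a $\checkl$ fires only out of a read-state $q_\delta$, so $p$ is a read and the event it matches is constrained accordingly), and checking the boundary behaviour of the bounded modalities when events cluster at, or exactly one unit before, $\tau_p$. Aligning the strict/non-strict endpoints of $\wFinally[>1],\wFinally[<1],\wFinally[=1]$ with the strict monotonicity of timestamps---so that the no-straddle formula neither over- nor under-counts---will be the most delicate part of the verification.
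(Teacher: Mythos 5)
Your structural analysis is correct as far as it goes --- and, for comparison, the paper itself offers \emph{no} proof of this proposition (it is one of the bare ``one can verify'' claims of Appendix~C), so the only thing to measure your attempt against is the statement itself. You correctly establish that $\hat{\theta}^{\leftarrow}_{0}$ holds at exactly one position, the challenged read $p$ immediately preceding the unique $\checkl$, and that the three right-hand conjuncts amount to: $p$ exists, some write/read occurs at exactly $\tau_p-1$, and every write/read at $\tau_p-1$ carries the same message as $p$. The problem is the obstacle you flag at the end. It is not a delicate endpoint verification that can be ``discharged by combining the single-write-per-timestamp property with the transition structure built into $\Plant$'': the plant enforces only $\textsf{C1}'$ and $\textsf{C2}'$ (the DCM's state skeleton, the alternation with $\textit{Nil}$, and strict monotonicity of write/read timestamps) and in no way prevents the event at $\tau_p-1$ from being a \emph{read} of the same message. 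Concretely, if the DCM contains transitions $(s_0,m_0!,s_1)$, $(s_1,m?,s_2)$, $(s_2,m?,s_3)$ --- consecutive reads of the same message are excluded neither by the determinism hypothesis nor by the w.l.o.g.\ assumptions --- then the word $\rho=(m_0!,1)(\textit{Nil},1)(m?,1.5)(\textit{Nil},1.5)(m?,2.5)(\checkl,2.5)$ is in $\Lang(\Plant)$; it satisfies all three right-hand conjuncts (the unique write/read at $\tau_p-1=1.5$ is a read of $m$, the \emph{same} message as at $p$, so the third conjunct, which quantifies only over pairs $m\neq m'$, says nothing about it), yet it falsifies the left-hand side, since no write at all occurs at time $1.5$. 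The same counterexample with $\checkr$ in place of $\checkl$ refutes the second equivalence.

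So the gap in your proposal is genuine and, worse, cannot be closed: the right-to-left implications are false as stated, and any proof must break exactly where yours does. What survives of your argument is the left-to-right direction (which your analysis does establish: the matching write is the unique event at $\tau_p-1$ and has the right message) and the reduction of the right-hand side to the ``same-message event at $\tau_p-1$'' condition, which makes the failure precise. A correct write-up would have to weaken the proposition --- e.g., prove the equivalence only for words in which the read-matching discipline is respected up to the challenged position, or restate it at the level of the synthesis game rather than word-by-word --- and then argue separately that this weaker statement still supports the \CFMTL reduction; note that even the natural repair ``the environment challenges the first faulty read'' needs care, since the event at $\tau_p-1$ may be an earlier, correctly matched read of the same message. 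In short: your plan is the right one up to its last step, but that last step is not a verification detail; it is a counterexample to the claim you were asked to prove, and it equally afflicts the paper's unproved assertion.
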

One can verify that the resulting formulas $\Phi_2$ and $\Phi_2'$ (obtained by replacing corresponding subformulas
in $\Phi_0$ and $\Phi_0'$, respectively) are in \CFMTL.

\section{Proof of Theorem~\ref{thm:openclosedmitl}}\label{app:openclosedmitl}

We give reductions for
$\BPRS_d^b(\textsf{Open-}\MITL^{\mathrm{ns}}[\Finally[\infty]])$
($= \BPRS_u^b(\textsf{Closed-}\MITL^{\mathrm{ns}}[\Finally[\infty]])$)
as well as
$\BPRS_d^b(\textsf{Closed-}\MITL^{\mathrm{ns}}[\Finally[\infty]])$
($= \BPRS_u^b(\textsf{Open-}\MITL^{\mathrm{ns}}[\Finally[\infty]])$)
where $b \in \{\star, \omega\}$, and the corresponding \BCRS problems.
Without loss of generality, we consider a DCM
$\mathcal{S} = \langle S, s_0, s_\textit{halt}, M, \Delta \rangle$
where $s_0 \neq s_\textit{halt}$, $s_0$ has an outgoing `write'
action, and $s_\textit{halt}$ is the only state in $S$ with no
outgoing transition. 
In what follows, we will use the plant $\mathcal{P}$ and the subformulas $\hat{\theta}_0^\rightarrow$, $\hat{\theta}_1^\rightarrow$
and $\hat{\theta}_0^\leftarrow$ (defined earlier in Appendix~\ref{app:smtlcfmtl})
and let $\hat{\theta}^\rightarrow_2 = \varphi_\textit{WR} \wedge \Big(\varphi_\textit{WR} \wUntil \big(\textit{Nil} \wedge (\textit{Nil} \wUntil \hat{\theta}^\rightarrow_1) \big)\Big)$.

\parag{$\textmd{\textsf{Closed-}}\MITL^{\mathrm{ns}}[\Finally[\infty]]$ as the desired specification.}
In this case, we can use the encoding based on $\textsf{C1}'$--$\textsf{C4}'$ as given in Appendix~\ref{app:smtlcfmtl}. 
We know that $\textsf{C1}'$ and $\textsf{C2}'$ are enforced by $\mathcal{P}$.
Now we give the formulas for the other conditions:
\begin{enumerate}

\item ($\textsf{C3}'$):
\begin{IEEEeqnarray*}{rCl}
\eta_1 & = & \bigvee_{m \in M} \wFinally \Big( m! \wedge \wFinally[\leq 1] \big(m? \wedge (m? \wUntil \checkl)\big)
 \wedge \wFinally[\geq 1] \big(m? \wedge (m? \wUntil \checkl)\big) \Big) \,.
\end{IEEEeqnarray*}

\item ($\textsf{C4}'$):
\begin{IEEEeqnarray*}{rCl}
\eta_2 & = & \neg \wFinally ( \phi_\textit{W} \wedge \wFinally[<1] \hat{\theta}_{1}^\rightarrow
\wedge \wFinally[> 1] \hat{\theta}_{0}^\rightarrow ) \wedge \neg \bigvee_{\substack{m, m' \in M \\ m \neq m'}} \wFinally \big( m! \wedge \wFinally[<1] \hat{\theta}_{2}^\rightarrow
\wedge \wFinally (m'! \wedge \hat{\theta}_{1}^\rightarrow) \wedge \wFinally[> 1] \hat{\theta}_{0}^\rightarrow \big) \,.
\end{IEEEeqnarray*}

\end{enumerate}
The overall formulas are
\begin{IEEEeqnarray*}{rCl}
\Psi_2' & = & \eta_1 \vee \eta_2 \\
\Psi_2  & = & \eta_1 \vee \eta_2 \vee \wFinally \textit{Halt} \,.
\end{IEEEeqnarray*}
Now we can state a proposition similar to Proposition~\ref{prop:realizability}.
Indeed, the only anomaly that may go undetected is when there is an $m!$ at time $t$
and the controller reaches $s_\textit{halt}$ by an $m'!$ at $t + 1$ with $m \neq m'$;
however in that case, the controller may as well propose $m'!$ at $t + 1 - \epsilon$
(for some $\epsilon > 0$) and reach $s_\textit{halt}$ in a way that respects the encoding.
\begin{proposition} $\mathcal{S}$ has a halting computation if and
  only if there is a bounded-precision or bounded-clocks controller
  for $\Psi_2$ and $\mathcal{P}$ but no such controller for $\Psi'_2$
  and $\mathcal{P}$.
\end{proposition}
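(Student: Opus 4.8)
The plan is to reduce the halting problem for $\mathcal{S}$ exactly along the lines of Corollary~\ref{cor:halting}, by splitting the statement into the two equivalences $(\mathrm{L1})$ ``$\Psi_2$ is realisable (in the bounded-precision and in the bounded-clocks sense) iff $\mathcal{S}$ halts or has an infinite computation avoiding $s_\textit{halt}$'' and $(\mathrm{L2})$ ``$\Psi_2'$ is realisable iff $\mathcal{S}$ has an infinite computation avoiding $s_\textit{halt}$''. Since exactly one of the three behaviours (halting, infinite non-halting, blocking) occurs, these two facts immediately give the proposition: $\mathcal{S}$ halts iff $\Psi_2$ is realisable while $\Psi_2'$ is not. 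The argument runs parallel to the proof of Proposition~\ref{prop:realizability}, the single structural change being that the plant $\mathcal{P}$ now enforces the turn-taking and delay discipline $\textsf{C1}'$ and $\textsf{C2}'$ that the formulas $\phi_i$ previously enforced, so that only the two message-matching conditions $\textsf{C3}'$ and $\textsf{C4}'$ remain to be guaranteed by $\eta_1$ and $\eta_2$. Just as in the ``\BPRS and \BCRS for \SMTL, \CFMTL, and \MITL'' paragraph, the controllers I build use either precision $(1,1)$ with $N$ clocks or precision $(\tfrac1N,1)$ with one clock, so both resource-bounded variants are settled at once.

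For the two ``if'' directions I would exhibit the \emph{faithful} controller: it proposes the writes and reads of the unique computation of $\mathcal{S}$ with strictly positive delays, placing each read $m?$ exactly one time unit after the write $m!$ that the FIFO discipline matches it to, while the plant supplies the interleaved $\textit{Nil}$, $\textit{Halt}$, $\checkl$ and $\checkr$ moves. I would then check, by a case analysis on the environment's move at each intermediate location $q_\delta$ (mirroring the four-step strategy of Appendix~\ref{app:realizability}), that every resulting play lies in $\Lang(\Psi_2)$: when the environment plays $\checkl$ (resp.\ $\checkr$) the faithfulness of the encoding makes $\eta_1$ (resp.\ $\eta_2$) true, when it lets the simulation reach $s_\textit{halt}$ the disjunct $\wFinally \textit{Halt}$ fires, and when it never checks and the computation is infinite the play satisfies $\eta_1 \vee \eta_2$ outright. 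The very same controller witnesses the ``if'' direction of $(\mathrm{L2})$ in the infinite case, where $s_\textit{halt}$ is never reached and the $\wFinally \textit{Halt}$ disjunct is consequently superfluous.

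For the ``only if'' directions I would argue contrapositively, reusing the blocking analysis of Proposition~\ref{prop:realizability}: against an arbitrary controller strategy, the environment follows the computation faithfully until the controller either gets stuck or is forced to cheat, and it then pinpoints the fault with a \emph{single} check. An unmatched read is refuted by a $\checkl$ placed immediately after it (falsifying $\eta_1$), and a write left unread one time unit later although further actions follow is refuted by a $\checkr$ placed at the first action strictly beyond one time unit (falsifying $\eta_2$). Dropping the $\wFinally \textit{Halt}$ disjunct in $\Psi_2'$ plays exactly the role that conjoining $\Globally(\neg s_\textit{halt})$ plays in $\Omega'$: a faithful play that reaches $s_\textit{halt}$ no longer satisfies the specification, whereas any attempt to stay away from $s_\textit{halt}$ must eventually introduce a matching error that the environment catches; this is what rules out realisability of $\Psi_2'$ in the halting case and yields $(\mathrm{L2})$.

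The hard part will be verifying that $\eta_1$ and $\eta_2$ genuinely implement $\textsf{C3}'$ and $\textsf{C4}'$ inside $\textsf{Closed-}\MITL^{\mathrm{ns}}[\Finally[\infty]]$, a fragment that bars the punctual $\wFinally[=1]$ used in \cite{BouBoz06} and even the $\Next$ operator. I expect the crux to be the correctness of the gadgets $\hat{\theta}_0^\rightarrow$, $\hat{\theta}_1^\rightarrow$, $\hat{\theta}_2^\rightarrow$ and $\hat{\theta}_0^\leftarrow$, which must be shown to encode ``the next action'', ``the action following the next $\textit{Nil}$'', and so on, purely by relativising $\wUntil$ to the $\textit{Nil}$ separators, together with the recovery of exact one-time-unit matching by straddling the point with $\wFinally[<1]$ against $\wFinally[>1]$ and $\wFinally[\geq 1]$. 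The single configuration these straddling formulas cannot distinguish is the boundary anomaly flagged right after the statement---an $m!$ at time $t$ whose matching read is replaced by a halting write $m'!$ at \emph{exactly} $t+1$ with $m\neq m'$---and here I would close the gap precisely as in that remark, by observing that the controller may equally place $m'!$ at $t+1-\epsilon$ and reach $s_\textit{halt}$ faithfully, so the undetected anomaly never helps the controller and both equivalences $(\mathrm{L1})$ and $(\mathrm{L2})$ go through.
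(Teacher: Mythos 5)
Your proposal reproduces the paper's intended argument (the paper itself gives only a sketch: ``similar to Proposition~2'', plus the $\epsilon$-shift remark for the boundary anomaly, which you handle correctly), but the pivotal step of your (L2) is false for the formulas as actually defined, and this is a genuine gap. You assert that ``a faithful play that reaches $s_\textit{halt}$ no longer satisfies $\Psi_2'$''. However, both conjuncts of $\eta_2$ are negations of formulas that can hold only if $\hat{\theta}_0^\rightarrow$ holds at some future position, and $\hat{\theta}_0^\rightarrow = \varphi_\textit{WR} \wedge (\varphi_\textit{WR} \wUntil \checkr)$ can hold only if the play contains an actual $\checkr$ event. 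Hence $\eta_2$ is \emph{vacuously true} on every $\checkr$-free play: in particular on every play ending with $\textit{Halt}$, and also on every play ending with $\checkl$ (so, as printed, even condition $\textsf{C3}'$ is not enforced, since $\eta_1$ is subsumed by the vacuous $\eta_2$). Consequently $\Psi_2' = \eta_1 \vee \eta_2$ and $\Psi_2 = \Psi_2' \vee \wFinally \textit{Halt}$ agree on all plant-consistent plays, and when $\mathcal{S}$ halts the faithful controller realises $\Psi_2'$ just as it realises $\Psi_2$: each of its plays in $\Lang(\Plant)$ ends with $\textit{Halt}$, with $\checkl$ after a correct read, or with a $\checkr$ that detects nothing, and all of these satisfy $\eta_1 \vee \eta_2$. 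The right-hand side of the proposition is then never satisfiable, and the ``only if'' direction of your (L2) breaks exactly at the step you state without verification.

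The repair is the one the paper itself uses in the neighbouring constructions: in the \SMTL/\CFMTL reduction the $\checkr$-part of $\Psi_0'$ carries the positive conjunct $\psi^\leftarrow[\checkr/\checkl]$, and in the Open case $\Psi_1'$ guards it by $\wFinally \checkr$; the Closed case needs the same guard, i.e.\ $\Psi_2' = \eta_1 \vee (\wFinally \checkr \wedge \eta_2)$ and $\Psi_2 = \Psi_2' \vee \wFinally \textit{Halt}$. With that correction, Halt-terminated plays falsify $\Psi_2'$, $\checkl$-plays must pass $\eta_1$, $\checkr$-plays must pass $\eta_2$, and the rest of your sketch --- the faithful controller for the ``if'' directions, a single well-placed check to refute any cheat, the $\epsilon$-shift for the undetectable $m'!$ at exactly $t+1$, and the $(1,1)$-precision/$N$-clock and $(\frac{1}{N},1)$-precision/one-clock resource bounds --- goes through as you describe. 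A blind proof of this statement has to notice that, as written, $\eta_2$ is a pure violation-detector and therefore can never make the halting case non-realisable; asserting the intended property instead of checking it against the formulas is where your attempt fails.
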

Finally, note that we can replace, e.g., $\textit{Nil} \wUntil[>0] \hat{\theta}_0^\rightarrow$ by
\[
\textit{Nil} \wedge \wFinally[>0] \hat{\theta}_0^\rightarrow \wedge \neg \wFinally \big(\varphi_\textit{WR} \wedge \wFinally (\textit{Nil}
\wedge \wFinally \hat{\theta}_0^\rightarrow)\big)
\]
since  $\hat{\theta}_0^\rightarrow$ will happen at most once;
and we can replace, e.g., $\wFinally[\leq 1] \big(m? \wedge (m? \wUntil \checkl)\big)$ by
\[
\wFinally \big(m? \wedge (m? \wUntil \checkl)\big) \wedge \neg \wFinally[>1] \big(m? \wedge (m? \wUntil \checkl)\big)
\]
since $\big(m? \wedge (m? \wUntil \checkl)\big)$ will happen at most once.

\parag{$\textmd{\textsf{Open-}}\MITL^{\mathrm{ns}}[\Finally[\infty]]$ as the desired specification.}
It is known that open timed automata accepts $d$-open sets of timed words, i.e.,
whenever they accept a timed word, they also accept all neighbouring timed words
that are sufficiently `close' to that timed word (see~\cite{OuaWor03} for details).
Recall that $\textsf{C3}'$ asserts that each $m!$ is followed by a corresponding $m?$
at exactly 1 t.u. later; this condition is clearly not $d$-open.
Indeed, it is not possible to give an $\textmd{\textsf{Open-}}\MITL^{\mathrm{ns}}[\Finally[\infty]]$ formula
$\phi^\textit{open}$ such that 
\begin{IEEEeqnarray*}{rCl}
\phi^\textit{open} & \models & (m!, 0)(\textit{Nil}, 0)(m?, 1)(\checkl, 1) \\
\phi^\textit{open} & \not \models & (m!, 0)(\textit{Nil}, 0)(m?, 1-\varepsilon)(\checkl, 1-\varepsilon)
\end{IEEEeqnarray*}
for any  $\varepsilon > 0$.
Since $\mathcal{P}$ does not affect the timing of events, we have to switch to a $d$-open encoding
where we use $\textsf{C3}''$ and $\textsf{C4}''$ in place of $\textsf{C3}'$ and $\textsf{C4}'$:
\begin{itemize}
\item[$\textsf{C3}''$] Each $m?$ at $t$ is preceded by a corresponding $m!$ at $t' \in (t''-1, t-1)$ where
($i$) $t'$ is the maximal timestamp in $(t''-1, t-1]$, ($ii$) $t''$ is the timestamp of the first `write' or `read' action before $m?$.
\item[$\textsf{C4}''$] Each $m!$ at $t$ is followed, if there are actions that occur at time $\geq t + 1$,
by a corresponding $m?$ at $t' \in (t + 1, t'' + 1)$
where ($i$) $t'$ is the minimal timestamp in $[t + 1, t'' + 1)$,
($ii$) $t''$ is the timestamp of the first `write' or `read' action after $m!$.
\end{itemize}
It is clear that $\textsf{C1}'$ and $\textsf{C2}'$ are enforced by the plant $\mathcal{P}$.
The rest of the conditions will be ensured by the following formulas:
\begin{enumerate}

\item ($\textsf{C3}''$):
\begin{IEEEeqnarray*}{rCl}
\beta_1 & = & \bigvee_{m \in M} \wFinally \Big(
m! \wedge \wFinally[>1] (m? \wedge \hat{\theta}_0^\leftarrow)
\wedge \neg \wFinally[\geq 1] (\textit{Nil} \wedge \wFinally \hat{\theta}_0^\leftarrow)
\wedge \neg \wFinally \big(\textit{Nil} \wedge \wFinally (\varphi_\textit{WR} \wedge \wFinally[\geq 1] \hat{\theta}_0^\leftarrow) \big)
\Big) \,.
\end{IEEEeqnarray*}

\item ($\textsf{C4}''$):
\begin{IEEEeqnarray*}{rCl}
\beta_\textit{fst}^\rightarrow & = & \wFinally \big( \varphi_\textit{W} \wedge \wFinally[\geq 1] \hat{\theta}_0^\rightarrow \wedge \neg \wFinally[>1] (\textit{Nil} \wedge \wFinally \hat{\theta}_0^\rightarrow) \big) \\
\beta^\rightarrow & = & \neg \wFinally \Big( \varphi_\textit{W} \wedge \wFinally[\geq 1] \hat{\theta}_0^\rightarrow
\wedge \neg \wFinally[> 1] (\textit{Nil} \wedge \wFinally \hat{\theta}_0^\rightarrow)
\wedge \wFinally \big(\textit{Nil} \wedge \wFinally (\varphi_\textit{WR} \wedge \wFinally[\geq 1] \hat{\theta}_0^\rightarrow) \big)
 \Big)
\wedge \beta_1[\checkr/\checkl] \,.
\end{IEEEeqnarray*}
\end{enumerate}
The overall formulas are
\begin{IEEEeqnarray*}{rCl}
\Psi_1' & = & \beta_1 \vee \big(\wFinally \checkr \wedge (\beta_\textit{fst}^\rightarrow \implies \beta^\rightarrow) \big)  \\
\Psi_1 & = & \beta_1 \vee \big(\wFinally \checkr \wedge (\beta_\textit{fst}^\rightarrow \implies \beta^\rightarrow) \big) \vee \wFinally \textit{Halt} \,.
\end{IEEEeqnarray*}
\begin{proposition} $\mathcal{S}$ has a halting computation if and
  only if there is a bounded-precision or bounded-clocks controller
  for $\Psi_1$ and $\mathcal{P}$ but no such controller for $\Psi'_1$
  and $\mathcal{P}$.
\end{proposition}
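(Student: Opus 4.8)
The plan is to follow the pattern of Corollary~\ref{cor:halting} and of the claim in Appendix~\ref{app:smtlcfmtl}: split the statement into two realisability claims and combine them. The decisive structural fact is that, since $F=\{s_\textit{halt}\}$, a finite play belongs to $\Lang(\mathcal{P})$ exactly when it reaches $s_\textit{halt}$ (up to trailing $\textit{End}$-loops), which by construction of $\mathcal{P}$ can occur in only two ways: either $\mathcal{S}$ performs its halting transition and the environment answers $\textit{Halt}$, or the environment interrupts with a check $\checkl$ or $\checkr$. In the infinite-word case the relevant accepting plays are the same, the $\textit{End}$ self-loop turning ``reach $s_\textit{halt}$'' into an accepting $\omega$-behaviour. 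Since the unique computation of $\mathcal{S}$ is of exactly one of the types ($i$) halting, ($ii$) infinite without reaching $s_\textit{halt}$, ($iii$) blocking, the whole argument is a case analysis on this trichotomy.

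First I would establish the two auxiliary claims that $\Psi_1$ is realisable (by a $\BPRS$- or $\BCRS$-controller) iff $\mathcal{S}$ is of type ($i$) or ($ii$), and that $\Psi_1'$ is realisable iff $\mathcal{S}$ is of type ($ii$). For the positive direction one builds the controller that faithfully encodes the computation, proposing the writes and reads of $\mathcal{S}$ with the timings dictated by $\textsf{C3}''$/$\textsf{C4}''$ (reads inside the open window $(t''-1,t-1)$, writes matched inside $(t+1,t''+1)$). If the environment plays a check, the faithful encoding satisfies $\beta_1$ (for $\checkl$) or $\beta^\rightarrow$ (for $\checkr$), hence $\Psi_1'$ and a fortiori $\Psi_1$; if no check occurs and $\mathcal{S}$ halts, the play reaches $s_\textit{halt}$ through $\textit{Halt}$, satisfying the disjunct $\wFinally\textit{Halt}$ of $\Psi_1$; and along an infinite computation the only plays landing in $\Lang(\mathcal{P})$ are the checked ones, all passed, so $\Psi_1'$ holds. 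As in Appendix~\ref{app:smtlcfmtl}, when $\mathcal{S}$ halts the channel length stays bounded by some $N$, so this controller is implementable with precision $(1,1)$ and $N$ clocks, or with precision $(\tfrac1N,1)$ and a single clock.

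Next I would treat the blocking case and the separation between the two formulas. As $\Psi_1'$ logically implies $\Psi_1$, every $\Psi_1'$-controller is a $\Psi_1$-controller, so it suffices to prove that type ($iii$) forbids a $\Psi_1$-controller while type ($i$) forbids a $\Psi_1'$-controller. When $\mathcal{S}$ is blocking the environment plays as in Proposition~\ref{prop:realizability}: a faithful controller eventually gets stuck, and any cheating controller (reading a message absent from the head, or leaving a write unmatched) is caught by an immediate $\checkl$ or $\checkr$ that drives the play to $s_\textit{halt}$ while falsifying $\beta_1$ or $\beta^\rightarrow$; carrying no $\textit{Halt}$, such a play fails both disjuncts of $\Psi_1$. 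A controller that instead proposes nothing also loses, since then every plant continuation, in particular an ill-timed one, becomes a consistent play. When $\mathcal{S}$ halts, the environment answers the last (halting) action of any faithful encoding by $\textit{Halt}$: the resulting play reaches $s_\textit{halt}$ with neither a check nor a matched final read, hence satisfies neither disjunct of $\Psi_1'$. Combining the two claims yields the statement, since ``$\Psi_1$ realisable and $\Psi_1'$ not realisable'' amounts to ``(type ($i$) or ($ii$)) and not type ($ii$)'', i.e.\ exactly type ($i$).

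The hard part will be the correctness of the $d$-open encoding. Because every $\textsf{Open-}\MITL^{\mathrm{ns}}[\Finally[\infty]]$ formula defines a $d$-open language, the punctual matching ``exactly one time unit later'' of $\textsf{C3}'$/$\textsf{C4}'$ is inexpressible and is replaced by the interval conditions $\textsf{C3}''$/$\textsf{C4}''$ encoded by $\beta_1$, $\beta_\textit{fst}^\rightarrow$ and $\beta^\rightarrow$. I would have to check with care that these formulas are \emph{sound}---whenever the controller inserts a read with no write in the prescribed open window, or a write with no matching read, the corresponding check genuinely falsifies the formula---and \emph{complete}---a faithful relaxed encoding passes every check---and that the borderline situation flagged for the closed case, a final write landing exactly one time unit before $s_\textit{halt}$, which the open formulas cannot detect, can always be eliminated by an $\varepsilon$-perturbation of the last delay. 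This bookkeeping over the open windows, rather than any new conceptual device, is where the real effort lies.
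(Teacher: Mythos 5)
Your overall architecture---the trichotomy on the unique computation of $\mathcal{S}$, the faithful-encoding controller for the positive direction, the check-based punishment and the ``controller proposes nothing'' case for the negative one, and the final Boolean combination---is exactly the pattern the paper relies on (it gives no explicit proof of this proposition, deferring to the arguments of Proposition~\ref{prop:realizability} and Appendix~\ref{app:smtlcfmtl}). However, there is a genuine gap in your two auxiliary claims. You assert that $\Psi_1$ (resp.\ $\Psi_1'$) is realisable \emph{by a \BPRS- or \BCRS-controller} whenever $\mathcal{S}$ is of type ($ii$), justifying this with the faithful encoding; but you establish bounded implementability of that encoding only in the halting case (``the channel length stays bounded by some $N$''). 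Along an infinite computation the channel may grow without bound, and then the faithful strategy is \emph{not} implementable with a fixed finite set of clocks or a fixed granularity: $\textsf{C3}''$/$\textsf{C4}''$ force every pending write to be read strictly inside the open window $(t+1,t''+1)$, so at any moment all pending writes lie within the last (roughly) one time unit, and an unbounded number of pairwise disjoint, arbitrarily narrow windows must be delimited by the controller's guards; with finitely many clocks and a fixed granularity one cannot carve a nonempty guard set inside each such window (essentially one clock per pending write is required). Hence for, e.g., a DCM that writes two messages per read forever, \emph{neither} $\Psi_1$ nor $\Psi_1'$ is bounded-realisable, and both of your ``iff'' claims fail in the type-($ii$) case.

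This matters because your final combination uses precisely the type-($ii$) part of the second claim: to rule out ``$\Psi_1$ bounded-realisable and $\Psi_1'$ not'' when $\mathcal{S}$ does not halt, you invoke the \emph{existence} of a bounded $\Psi_1'$-controller. The proposition is still true, but the repair goes the other way: in type ($ii$) one shows that \emph{any} bounded $\Psi_1$-controller is already a $\Psi_1'$-controller. Indeed, since the environment may play $\checkl$ or $\checkr$ with delay $0$ after any write or read, a $\Psi_1$-realising controller must pass all checks at every prefix, hence its plays encode the real channel semantics of $\mathcal{S}$; as the real computation never reaches $s_\textit{halt}$, the play never visits a location $q_\delta$ with $\delta=(s,a,s_\textit{halt})$, so $\textit{Halt}$ is never played. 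Every consistent play in $\Lang(\Plant)$ therefore ends with a check and satisfies $\Psi_1$ without $\wFinally\textit{Halt}$, i.e., satisfies $\Psi_1'$. With this substitution (keeping your type-($i$) and type-($iii$) arguments, which are sound), the equivalence follows.
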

Finally, note that we can replace all `until' subformulas as before.

\section{A 3\EXP\ algorithm for $\BRRS^\star_d(\MITL)$}\label{sec:3exp-algor-brrsst}

In this appendix, we present with more details the algorithm for
$\BRRS^\star_d(\MITL)$ that we have briefly sketched in
Section~\ref{sec:BoundReacSynt}. An even more exhaustive presentation
of the algorithm can be found in~\cite{Est15}.

Our decision procedure is based on the translation of \MITL formulas into
equivalent \emph{one-clock alternating timed automata} (\OCATA), a
model of timed automata with a single clock but transitions
dynamically forking to check several properties in parallel. A
\emph{one-clock alternating timed automaton} (shortly, \OCATA) over the
alphabet $\Sigma$ is a tuple $\A = (L,\ell_0,\atransitions,L_f)$ where
$L$ is a finite set of locations, $\ell_0\in L$ is the initial
location, $L_f\subseteq L$ is a set of final locations, and
$\atransitions\colon L\times \Sigma \to \Operations(L)$, where
$\Operations(L)$ is the set of formulas $\operation$ defined by
\[\operation := \top \mid \bot \mid \ell \mid y.\operation \mid x
\bowtie c \mid \operation\lor \operation \mid \operation\land
\operation\]
where $\ell\in L$, $y$ is the unique clock,
${\bowtie}\in\{{<},{\leq},{=},{>},{\geq}\}$ and $c\in \N$.
Intuitively, $x.\operation$ means that clock $x$ must be reset to $0$.
Whereas disjunctions denote classical non-determinism, conjonctions
moreover denote that two objectives must be fulfilled in the sequel.
Indeed, operations of $\Operations(L)$ may always be equivalently
written in disjunctive normal form.

For instance, consider the \OCATA depicted in
\figurename~\ref{fig:OCATA}. It consists of two locations $\ell_0$
(initial and final) and $\ell_1$. The transition function is defined
by: $\atransitions(\ell_0,b) = \ell_0$,
$\atransitions(\ell_0,a) = \ell_0 \land x.\ell_1$,
$\atransitions(\ell_1,a) = \ell_1$, and
$\atransitions(\ell_1,b) = y\leq 1$. Intuitively, when reading an $a$
from $\ell_0$, two copies are created: one continues in location
$\ell_0$, and another, with a fresh clock $x$ reset to $0$, goes to
$\ell_1$. Location $\ell_1$ is used to ensure that a letter $b$ is
read at most 1 time unit after this splitting. Therefore, this \OCATA
recognises the same timed language as the MITL formula
$\Globally(a\Rightarrow \Finally[\leq 1]b)$. See
\cite{OuaWor07,BriEst13} for a thorough definition of the semantics.

As already said, all \MITL formulas may be inductively translated into
equivalent \OCATA: this is the base step to the use of well-quasi
order techniques to obtain the decidability (with a non primitive
recursive complexity) of $\BRRS^\star_d(\MTL)$
in~\cite{BouBoz06}. Using the fact that the \OCATA coming from \MITL
formulas can be translated into non-deterministic timed automata, the
decidability of the problem $\BRRS^\star_d(\MITL)$ is more directly
obtained by the result of \cite{DSoMad02}. Combining the blow-up in
the size of the non-deterministic timed automaton equivalent to an
\MITL formula and the 2-$\EXP$ complexity of the \BRRS realisability
check of \cite{DSoMad02} for timed automata, we obtain that
$\BRRS^\star_d(\MITL)$ can be decided in 3-$\EXP$. However, this
technique requires the construction and determinisation of a full
region automaton, which prevents its use in practice.

Therefore, our second contribution is to give an \emph{on-the-fly}
algorithm to solve the problem \BRRS over \MITL, yet keeping a
3-$\EXP$ theoretical upper-bound, that avoids the construction and
determinisation of the full region automaton. 
The main idea is to use the interval semantics of \cite{BriEst13} for
the \OCATA obtained from the formula. The classical semantics of an
\OCATA is defined in terms of states (a pair composed of a location and
a valuation of the unique clock $y$ of the \OCATA), and configurations
(a set of states). Instead, in \cite{BriEst13} is introduced the
alternative \emph{interval semantics}, that allows for pairs of
location and \emph{interval}. Intuitively, several instances of the
same location with different valuations are \emph{merged} together, to
gain in concision.

\begin{example}\label{ex:OCATA}
  For the \OCATA of \figurename~\ref{fig:OCATA}, one possible
  configuration in the interval semantics consists of
  $C=\{(\ell_0,[0.05,0.5]),(\ell_1,[0.1,0.4]), (\ell_1,[0.5,0.9])\}$.
  In this configuration, reading a letter $b$ after a delay of $0.1$
  time units will make disappear the two last pairs, since the guard
  $y\leq 1$ over the transition exiting from $\ell_1$ is entirely
  fulfilled in both intervals after translation of delay $0.1$. For
  the first pair, reading a $b$ will simply translate the current
  interval. Therefore, we have the sequence
  $C\xrightarrow{0.1,b} \{(\ell_0,[0.15,0.6])\}$. Reading letter $a$
  after a delay of $0.1$ would in contrary keep the two last pairs, by
  translating the intervals, and split the first pair: one goes back
  to $\ell_0$ after translation, the second copy goes to $\ell_1$ with
  a fresh copy of clock $y$. Then, there is a choice: either merging
  the new copy with the next interval associated to location $\ell_1$,
  or keeping a singleton interval. Therefore, we have two possible
  transitions:
  $C\xrightarrow{0.1,a} \{(\ell_0,[0.15,0.6]),(\ell_1,[0,0.5]),
  (\ell_1,[0.6,1])\}$
  and
  $C\xrightarrow{0.1,a}
  \{(\ell_0,[0.15,0.6]),(\ell_1,[0,0]),(\ell_1,[0.2,0.5]),
  (\ell_1,[0.6,1])\}$.
\end{example}

By applying the translation of \cite{OuaWor07}, from every \MITL
formula $\phi$, we can build an equivalent \OCATA $\A_\phi$. Moreover,
a good merging function is described in~\cite{BriEst13} that permits
to keep the same timed language: it non-deterministically merges
intervals in case there are too many (more than a given constant
$M(\phi)$) associated with the same location. In particular, this
allows authors of \cite{BriEst13} to produce directly a
non-deterministic timed automaton $\B_\phi$ that uses $O(M(\phi))$
clocks (one for each endpoint of the intervals), and is equivalent to
$\phi$. 

We now explain our solution for $\BRRS^\star_d(\MITL)$. Therefore, we
fix a formula $\phi$ of \MITL, a plant
$\Plant=(Q,q_0,\ptransitions,Q_f)$ with clocks $X_\Plant$, and a
granularity $\mu=(X,m,K)$ such that $X\cap X_\Plant = \emptyset$. We
suppose built the \OCATA $\A_{\lnot\phi}=(L,\ell_0,\atransitions,L_f)$
over the clock $x\notin X_\Plant\cup X$, and the merging function
associated with the \emph{negation of the formula}. 

\parag{Unfolding the system.} Intuitively, our synthesis problem
can be solved by considering an infinite tree which unfolds all the
possible parallel executions of $\Plant$, $\A_{\lnot\phi}$, and all
the possible controllers. This tree should contain information on the
configuration of the system (plant, \OCATA and controller). Precisely,
\emph{configurations} of the system are tuples $\gamma = (q,E)$ where
$q\in Q$, and $E$ is a finite set of tuples $(\nu,\nu_\Plant,C)$, with
$\nu$ a valuation of the clocks $X$ of the controller, $\nu_\Plant$ a
valuation of the clocks $X_\Plant$ of the plant, and $C$ a
configuration of $\A_{\lnot\phi}$ with at most $M(\phi)$ intervals
associated to each location. We then describe the \emph{dynamics} of
the system. In each configuration, symbolic letters that can be played
are in
$\Gamma = \Sigma \times \AtomicGuards_{m,K}(X\uplus X_\Plant)\times
\powerset{X}$:
it is composed of an action in $\Sigma$, an atomic guard over the
clocks that can observe the controller (so not on the clocks copies of
$\A_{\lnot\phi}$), and a set of clocks chosen by the controller to be
reset. For a symbolic letter $(a,g,R)\in \Gamma$, and configurations
$(q,E)$ and $(q',E')$, we write $(q,E)\xrightarrow{a,g,R} (q',E')$ if
there is a transition of the plant
$(q, (a,g_\Plant,R_\Plant), q')\in \dtransitions_\Plant$, and an
(atomic) guard of the controller's clock
$g_c \in \AtomicGuards_{m,K}(X)$ such that:
\begin{inparaenum}
\item $g = g_c\land g_\Plant$;
\item
  $E' = \{(\nu',\nu'_\Plant,C')\mid \exists t\in\R^+,
  (\nu,\nu_\Plant,C)\in E \text{ such that } \nu+t\models g_c,
  \nu_\Plant+t\models g_\Plant, \nu'=(\nu+t)[R\leftarrow 0],
  \nu'_\Plant = (\nu_\Plant + t)[R_\Plant\leftarrow 0], \text{ and } C
  \xrightarrow{t,a} C'\}$.
\end{inparaenum}
Since the plant is time-deterministic, for all $(q,E)$ and $(a,g,R)$,
there is at most one configuration $(q',E')$ such that
$(q,E)\xrightarrow{a,g,R} (q',E')$.

\parag{Making the tree finite.} The previous tree, starting from the
unique initial configuration
$\gamma_0 = \big(q_0, \{([X\mapsto 0], [X_\Plant\mapsto 0], \{(\ell_0,
[0,0])\})\}\big)$,
unfolding all the executions of the system, is infinitely branching,
because of the density of time, and has infinite depth. Here is how we
cope with these difficulties. First, to deal with the density of time,
we use the classical \emph{region equivalence} \cite{OuaWor07}: we
choose regions that are compatible with the precision $(m,K)$, we
label the nodes of the tree by (sets of) regions, and we unfold
symbolic actions based on the precision $(m,K)$. Moreover, another
source of unboundedness relates with the \OCATA $\A_{\lnot\phi}$ that
may create unboundedly many clock copies during its execution: we rely
on the interval semantics and the merging function to cope with that
issue, which is equivalent to fixing a finite set of clock copies
$\{x_1,x'_1, x_2,x'_2,\ldots, x_{M(\phi)},x'_{M(\phi)}\}$ for
$\A_{\lnot\phi}$. Therefore, labels of the nodes are approximated with
\emph{region words}, only keeping relevant information in terms of
regions and order of the fractional parts of the clocks. The alphabet
used for the letters of the region words will be
$\Lambda = \powerset{(X\uplus X_\Plant)\times \Regions_{m,K} \cup
  (L\times \Regions_{m,K} \times \{1,2,\ldots,M(\phi)\})}$.
Pairs $(y,r)$ of $(X\uplus X_\Plant)\times \Regions_{m,K}$ represents,
the fact that the valuation of clock $y$ is in region $r$. Triples
$(\ell, r, n)$ are now used to describe that in the configuration of
$\A_{\lnot\phi}$, there exists an interval (of index $n$ smaller than
the number $M(\lnot\phi)$ of intervals allowed in the interval
semantics) with one of its bound in region $r$, associated with
location $\ell$. A configuration $\gamma = (q,E)$ with
$E = \{(\nu_j,\nu_{\Plant,j},C_j)\mid 1\leq j\leq J\}$ is then
symbolically represented by the location $q$, and a set of region
words $\wordreg(\eta_j)$ over the alphabet $\Lambda$, for each tuple
$\eta_j = (\nu_j,\nu_{\Plant,j},C_j)$, defined as follows. We first
associate to the tuple $\eta_j$ the set
\begin{align*}
  G &= \{(y,\nu_j(x))\mid y\in X\} \cup \{(y,\nu_{\Plant,j}(x))\mid
      y\in X_\Plant\} \\ 
    &\quad \cup \bigcup_{\ell\in L}\big\{(\ell, \inf(I_1), 1),(\ell,
      \sup(I_1), 1), \ldots, (\ell, \inf(I_m), m),(\ell, \sup(I_m), m) \\[-4mm]
    & \hspace{3cm} \,\mid C_j(\ell) = \{I_1,\ldots,I_m\}\big\} \,.
\end{align*}
The set $G$ is then partitioned into a sequence of subsets
$G_1,\ldots, G_p$ depending on the fractional parts: for all $1\leq
i,j\leq p$, for every pair $(y,u)$ or triple $(\ell,u,k)$ of $G_i$,
and pair $(z,v)$ or triple $(\ell',v,k')$ of $G_j$, we have $i\leq j$
if, and only if, $\fract(u) \leq \fract(v)$. Then, $\wordreg (\eta_j)$ is the
finite word from $\Lambda^\star$ given by $Abs(G_1)\cdots Abs(G_p)$,
where for all $1\leq i\leq p$, 
\[Abs(G_i) = \{\tuple{y,\reg(u)}\mid (y,u)\in G_i\} \cup
\{\tuple{\ell,\reg(u),k} \mid (\ell,u,k)\in G_i\}\,.\]

Two configurations $\gamma$ and $\gamma'$ are equivalent, written
$\gamma\sim \gamma'$, if and only if
$\setwordreg(\gamma) = \setwordreg(\gamma')$. This relation is a
bisimulation with respect to the transition relation defined above:
\begin{lemma}\label{lem:ocata-bisimulation}
  For all configurations $\gamma_1\sim\gamma_2$, $(a,g,R)\in\Gamma$,
  and $\gamma'_1$ such that $\gamma_1\xrightarrow{a,g,R}\gamma'_1$,
  there exists $\gamma'_2$ such that $\gamma'_1\sim\gamma'_2$ and
  $\gamma_2\xrightarrow{a,g,R}\gamma'_2$. 
\end{lemma}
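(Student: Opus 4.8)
The plan is to prove that $\sim$ is a time-abstract bisimulation for $\xrightarrow{a,g,R}$ by reducing it to the classical region bisimulation of Alur--Dill, applied \emph{simultaneously} to every clock that occurs in a configuration: the controller clocks $X$, the plant clocks $X_\Plant$, and the (at most $2M(\phi)$ per location) endpoints of the intervals maintained by $\A_{\lnot\phi}$ in the interval semantics. First I would unpack the hypothesis: writing $\gamma_1 = (q,E_1)$ and $\gamma_2 = (q,E_2)$, the equality $\setwordreg(\gamma_1) = \setwordreg(\gamma_2)$ forces the same location $q$ together with the same \emph{set} $\{\wordreg(\eta)\mid \eta\in E_1\} = \{\wordreg(\eta)\mid \eta\in E_2\}$. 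Since $\gamma_1\xrightarrow{a,g,R}\gamma'_1$ is witnessed by a plant edge $(q,(a,g_\Plant,R_\Plant),q')$ and a controller guard $g_c$ with $g=g_c\wedge g_\Plant$, and since $\gamma_2$ carries the same location $q$, the very same edge and guard are enabled from $\gamma_2$; as the plant is time-deterministic this yields a unique successor $\gamma'_2=(q',E'_2)$. It then remains only to establish $\setwordreg(\gamma'_1)=\setwordreg(\gamma'_2)$.

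The heart of the argument is a \emph{per-tuple} matching, which is sound because in the definition of $E'$ the delay $t$ is existentially quantified \emph{inside} the set-builder, independently for each source tuple. Fix $\eta'=(\nu',\nu'_\Plant,C')\in E'_1$, obtained from some $\eta=(\nu,\nu_\Plant,C)\in E_1$ by a delay $t$ with $\nu+t\models g_c$, $\nu_\Plant+t\models g_\Plant$, $C\xrightarrow{t,a}C'$, followed by the resets $[R\leftarrow 0]$ and $[R_\Plant\leftarrow 0]$. Because $\wordreg(\eta)$ lies in $\setwordreg(\gamma_1)=\setwordreg(\gamma_2)$, there is $\bar\eta=(\bar\nu,\bar\nu_\Plant,\bar C)\in E_2$ with $\wordreg(\bar\eta)=\wordreg(\eta)$. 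I would then invoke the time-abstract bisimulation property of regions: since the region word records the \emph{joint} region of every clock together with the linear order of their fractional parts, and since every constant occurring in $g_c$, in $g_\Plant$, and in the internal guards of $\A_{\lnot\phi}$ is compatible with the chosen precision $(m,K)$, there is a delay $\bar t$ after which $\bar\nu+\bar t$, $\bar\nu_\Plant+\bar t$ and the interval endpoints of $\bar C$ occupy exactly the same regions, in the same fractional order, as their $\eta$-counterparts after $t$. In particular $\bar\nu+\bar t\models g_c$ and $\bar\nu_\Plant+\bar t\models g_\Plant$, and one can replay on $\bar C$ the same OCATA moves (letter reading, guard checks, fresh-clock resets and merging choices, all determined by regions and by the interval indices $1,\dots,M(\phi)$ recorded in $\Lambda$), producing $\bar C'$ with $\bar C\xrightarrow{\bar t,a}\bar C'$. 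As resets preserve region words (a reset clock falls into the zero region and the least fractional class), the resulting tuple $\bar\eta'\in E'_2$ satisfies $\wordreg(\bar\eta')=\wordreg(\eta')$. Hence $\wordreg(\eta')\in\setwordreg(\gamma'_2)$, giving $\setwordreg(\gamma'_1)\subseteq\setwordreg(\gamma'_2)$; the reverse inclusion follows symmetrically, so $\gamma'_1\sim\gamma'_2$.

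I expect the main obstacle to be the justification of this time-abstract bisimulation step in the presence of the OCATA interval endpoints, since a \emph{single} delay $\bar t$ must place the plant clocks, the controller clocks, and all interval endpoints of $\bar C$ into the correct target regions and fractional order at once. This is precisely what the region construction guarantees once the endpoints are regarded as additional clocks, but it forces me to check that the operations used in $C\xrightarrow{t,a}C'$ are region-respecting: the internal guards compare an endpoint to an integer below the region bound and are therefore region-invariant; the reset of a freshly spawned copy lands in the zero region; and merging two intervals $[a_1,b_1]$ and $[a_2,b_2]$ into $[a_1,b_2]$ depends only on the regions and fractional order of the four endpoints, which are preserved on both sides. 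Since $\Lambda$ was designed to record exactly this data---regions of all clocks, their fractional ordering, and interval indices bounded by $M(\phi)$---the replay on the $\bar\eta$ side is always available, and the per-tuple matching closes the proof.
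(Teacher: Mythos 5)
The paper never writes out a proof of this lemma: it is stated as an immediate consequence of the region construction and the interval semantics (details are deferred to the cited full version/thesis), so there is no line-by-line proof to compare against. Your argument is correct and is precisely the justification the construction is designed to admit: since $\wordreg$ records the regions and fractional-part ordering of controller clocks, plant clocks, and interval endpoints all treated as clocks, the classical time-abstract bisimulation property of regions lets you transfer each tuple's delay, guard checks, \OCATA step, merges and resets to the matched tuple of $E_2$, and the per-tuple matching (with a different delay $\bar t$ per tuple) is legitimate exactly because the delay $t$ is existentially quantified inside the set-builder defining $E'$.
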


This allows us to lift the transition relation to sets of word
regions: we let $\setwordreg \xrightarrow{a,g,R} \setwordreg'$ if
there exists configurations $\gamma$ and $\gamma'$ such that
$\setwordreg(\gamma)=\setwordreg$,
$\setwordreg(\gamma')=\setwordreg'$, and
$\gamma\xrightarrow{a,g,R} \gamma'$. Therefore, we unfold the
transition relation over sets of word regions, starting from a root
labelled by $\setwordreg(\gamma_0)$ (recall that
$\gamma_0 = \big(q_0, \{([X\mapsto 0], [X_\Plant\mapsto 0], \{(\ell_0,
[0,0])\})\}\big)$ is the unique initial configuration).

Region words guarantee that the tree is finitely branching, yet it
could still have infinite branches. Contrary to the algorithm of
\cite{BouBoz06} that deals with \MTL formulas (and thus general
\OCATA), we do not need to use well-quasi order techniques to cut
branches of the tree. Indeed, labels of the nodes of the tree are now
taken from a finite alphabet, and we simply stop the exploration of a
branch if a node $u$ has the same label as one of its ancestor $u'$:
intuitively, if $u'$ is declared winning, the controller will win if
he plays in $u'$ as he reacts in $u$. We also stop the exploration
when we reach a deadlock, or a node whose label---that we will call
\emph{losing label}---contains at the same time a final location of
$\Plant$, and one of the region word where every location of
$\A_{\lnot\phi}$ is final: this represents a final configuration of
the \OCATA, and models the violation of the specification
$\phi$. Because there is only a finite number of possible labels, this
ensures that the tree $\tau$ hence constructed is finite.

\parag{Turning the finite tree into a game.} We finally define the
game considered on this tree, modeling the realisability question. A
strategy for the controller is a mapping from each node labelled
$\setwordreg$, that is not a leaf of the tree, towards a valid subset
of symbolic actions available in this node, i.e., a subset
$\alpha\subseteq \Gamma$ verifying the following properties:
\begin{enumerate}
\item for all $(a,g,R)\in\alpha$, there exists $\setwordreg'$ such
  that $\setwordreg\xrightarrow{a,g,R}\setwordreg'$;
\item if
  $\alpha\cap \Sigma_C\times \AtomicGuards_{m,K} (X \uplus X_\Plant)
  \times \powerset{X}\neq\emptyset$,
  i.e., an action of the controller is proposed, then there exists
  $(a,g,R)\in\alpha$ with $a\in\Sigma_C$ such that, for all actions
  $(b,g',R')$ fireable from $\setwordreg$ \emph{before} $(a,g,R)$
  (i.e., such that $g$ is a time-successor of $g'$) and with
  $b\in\Sigma_E$, $\alpha$ contains an action $(b,g',R'')$ fireable
  from $\setwordreg$: notice that not all such actions $(b,g',R')$
  need to appear in $\alpha$, since the controller must retain the
  choice of clock he wants to reset;
\item if
  $\alpha\cap \Sigma_C\times \AtomicGuards_{m,K} (X\uplus X_\Plant)
  \times \powerset{X}=\emptyset$,
  i.e., no actions of the controller are proposed, then, for all
  actions $(b,g',R')$ fireable from $\setwordreg$ with $b\in\Sigma_E$,
  $\alpha$ contains an action $(b,g',R'')$ also fireable
  from~$\setwordreg$.
\end{enumerate}
We declare the tree $\tau$ winning in case there exists a strategy
$\strategy$ in $\tau$ such that no reachable leaf from the root when
following choices of the strategy has a losing label. We can decide if
the finite tree $\tau$ is winning, and in such case compute a winning
strategy, using the classical backward induction technique.

\parag{Correctness.} We show that there is a controller for the
instance of $\BRRS^\star_d(\MITL)$ if and only if there is a winning
strategy for the controller in the corresponding (finite) tree. First,
it is easy to check that, \emph{if} there exists a controller for the
\BRRS problem, \emph{then} we can extract a winning strategy from
it. Therefore, we only give a sketch of the other implication:
\emph{if} the tree is winning, \emph{then} there exists a controller
for the \BRRS problem. Consider that the tree $\tau$ is winning, and
call $\pi$ a winning strategy. Intuitively, the locations of the
controller we extract from $\tau$ are labelled by $\tau$'s nodes, and,
for all controller locations labelled by node $u$, the outgoing
transitions should correspond to the winning set of valid actions
$\pi(u)$. Unfortunately, the situation is not so simple, because the
node $u$ could be a leaf that has not been developed because of an
ancestor $u'$ with the same label. In this case, the controller simply
mimics in $u$ the decision taken in $u'$ and goes up in the tree. All
details settled, we obtain:

\begin{proposition}
  The tree $\tau$ is winning if and only if there exists a controller
  in the $\BRRS^\star_d(\MITL)$ problem, i.e., there exists a
  $(X,m,K)$-granular symbolic alphabet $\Gamma$ based on $(\Sigma,X)$,
  and a time-deterministic STS $\T$ over $\Gamma$ such that
  $T\Sigma^\star_{\T,\Plant}\cap \Lang(\Plant) \subseteq
  \Lang(\phi)\cup\{\varepsilon\}$.
\end{proposition}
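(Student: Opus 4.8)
The plan is to prove the two implications separately, in both cases exploiting Lemma~\ref{lem:ocata-bisimulation} to transfer information between concrete configurations and their region-word abstractions, and the correctness of the interval semantics of~\cite{BriEst13} to relate reachability of a \emph{losing label} with the production of a timed word in $\Lang(\Plant)$ that satisfies $\lnot\phi$. Throughout, the guiding invariant is that a finite play $\sigma\in T\Sigma^\star_{\T,\Plant}$ drives the product of $\Plant$, the controller, and $\A_{\lnot\phi}$ (in the interval semantics) into a configuration $\gamma$, and that $\setwordreg(\gamma)$ is exactly the label of the node of $\tau$ reached by reading the symbolic trace of $\sigma$. The bisimulation lemma guarantees that this correspondence is well defined up to $\sim$, independently of the chosen concrete representative and of the order of fractional parts, so that from the label alone one can soundly speak of the subsequently reachable labels.

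First, the easy direction: assume a $(X,m,K)$-granular time-deterministic STS $\T$ realises $\phi$. I would define a strategy $\pi$ on $\tau$ by letting, at each internal node $u$ with label $\setwordreg$, $\pi(u)$ be the set of symbolic actions $(a,g,R)\in\Gamma$ such that $\T$ (synchronised with $\Plant$) enables, from the concrete configuration abstracted by $u$, a matching move, keeping the controller's guard $g_c\in\AtomicGuards_{m,K}(X)$ and composing it with the plant guard as $g=g_c\wedge g_\Plant$. The three validity conditions on $\pi(u)$ mirror exactly the three cases in the definition of $T\Sigma^\star_{\T,\Plant}$ (the controller proposing a controllable action while allowing every faster environment action to overtake it; or the controller proposing nothing and ceding to the environment), so $\pi$ is a valid strategy. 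It then suffices to show that no leaf reachable under $\pi$ carries a losing label: a reachable losing label would, by the interval-semantics correctness together with Lemma~\ref{lem:ocata-bisimulation}, witness a concrete play $\sigma\in T\Sigma^\star_{\T,\Plant}\cap\Lang(\Plant)$ on which $\A_{\lnot\phi}$ accepts, i.e.\ $\sigma\in\Lang(\lnot\phi)$, contradicting that $\T$ realises $\phi$.

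For the converse, which is the main work, assume $\tau$ is winning with winning strategy $\pi$, and build a controller. I would take as STS locations the developed (internal) nodes of $\tau$, and for each $(a,g,R)\in\pi(u)$ with $g=g_c\wedge g_\Plant$ add a transition $u\xrightarrow{(a,g_c,R)}u'$ over the symbolic alphabet based on $(\Sigma,X)$, where $u'$ is the child reached by $\setwordreg(u)\xrightarrow{a,g,R}\setwordreg(u')$; when $u'$ is a leaf that was cut because it shares its label with an ancestor $u''$, I redirect the transition to $u''$. This \emph{folding} turns the finite tree into a finite time-deterministic STS $\T$: time-determinism follows because the atomic $\mu$-granular guards $g_c$ are pairwise region-disjoint and $\pi(u)$ selects at most one reset per guard region, and $\mu$-granularity is immediate from $\Gamma$ being $(X,m,K)$-granular. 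The inclusion $T\Sigma^\star_{\T,\Plant}\cap\Lang(\Plant)\subseteq\Lang(\phi)\cup\{\varepsilon\}$ is then established by tracking an arbitrary non-empty consistent accepted play $\sigma$ through $\T$: by the invariant above, the sequence of labels visited by $\sigma$ is precisely a $\pi$-consistent path of $\tau$, where each jump back to an ancestor $u''$ is sound because $u'$ and $u''$ carry the same label and Lemma~\ref{lem:ocata-bisimulation} guarantees they have the same reachable labels thereafter. Since $\pi$ is winning, this path never reaches a losing label, so the product never has $\Plant$ accepting while every location of $\A_{\lnot\phi}$ is final; hence $\sigma\in\Lang(\Plant)$ implies $\sigma\notin\Lang(\lnot\phi)$, i.e.\ $\sigma\in\Lang(\phi)$.

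I expect the folding step to be the principal obstacle, for two reasons. First, one must argue that replacing a cut leaf by its equal-labelled ancestor is behaviour-preserving: this is exactly where Lemma~\ref{lem:ocata-bisimulation} is indispensable, since it lets me substitute one $\sim$-equivalent configuration for another without altering the set of subsequently reachable labels, so that a play consistent with $\T$ projects onto a well-defined (possibly looping) path of $\tau$ confined to the winning region. Second, the timed-game interaction must be reproduced faithfully: the validity conditions defining $\pi(u)$ have to match the clause-by-clause definition of $T\Sigma^\star_{\T,\Plant}$, in particular that the environment may always overtake the controller with a smaller or equal delay and that the controller must have a response to every such faster environment action, so that a strategy winning in the finite abstract safety game indeed yields a controller winning against \emph{all} environment behaviours permitted by $\Plant$. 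Once these two points are settled, the remaining verifications (time-determinism, granularity, and the treatment of $\varepsilon$, which is vacuously in $\Lang(\phi)\cup\{\varepsilon\}$) are routine.
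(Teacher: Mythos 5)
Your proposal is correct and follows essentially the same route as the paper: the forward direction extracts a winning strategy from the controller, and the converse builds the controller by taking the nodes of $\tau$ as locations, the winning strategy's valid action sets as outgoing transitions, and folding each cut leaf back onto its equal-labelled ancestor, with Lemma~\ref{lem:ocata-bisimulation} justifying that this substitution preserves behaviour. The paper only sketches these steps; your write-up fills in the same construction (including the game-interaction matching and the time-determinism/granularity checks) without deviating from it.
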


As already announced, a first advantage of this technique (contrary to
previous methods of \cite{DSoMad02,BouBoz06}) is that we do not
require to construct the non-deterministic timed automaton equivalent
to the \MITL specification. Another advantage is the possibility to
build the tree on-the-fly, i.e., to return a positive or negative
answer to the realisability check, as soon as we are able to compute
it, without constructing the whole execution tree. Notice moreover
that, following techniques presented, e.g., in \cite{CasDav05}, it may
be possible to conclude very quickly whether or not the tree $\tau$ is
winning, by back-propagating as early as possible information
regarding the winning status of a node: for instance, if a winning
strategy has been found while exploring a node $u$, it might induce a
winning strategy for the parent $u'$ of $u$, inducing that we can stop
the exploration of other children of $u'$. In the worst-case scenario,
the size of the tree will still be bounded by the size of the game
constructed in \cite{DSoMad02} for \MITL which ensures the 3-$\EXP$
theoretical upper-bound for our algorithm. In practice however, the
exploration of the tree might yield much more quickly to a positive or
negative answer for the realisability question.

\end{document}